\journal{arXiv}
\newtheorem{theorem}{Theorem}[section]
\newtheorem{proposition}[theorem]{Proposition}
\newtheorem{corollary}[theorem]{Corollary}
\newtheorem{definition}[theorem]{Definition}
\newtheorem{remark}[theorem]{Remark}
\newtheorem{example}[theorem]{Example}
\newcommand\faex{{\rotatebox[origin=c]{180}{{\sffamily\AE}}}}
\newcommand\ket[1]{\ensuremath{|{#1}\rangle}}
\newcommand\abstr[1]{#1.}
\newcommand\B{{\mathcal B}}
\newcommand\Q{{\mathcal Q}}
\newcommand\inl{\mbox{\it inl}}
\newcommand\inr{\mbox{\it inr}}
\newcommand\elimtop{\delta_{\top}}
\newcommand\elimbot{\delta_{\bot}}
\newcommand\elimand{\delta_{\wedge}}
\newcommand\elimor{\delta_{\vee}}
\newcommand\elimsup{\delta_{\odot}}
\newcommand\super[2]{[#1,#2]}
\newcommand\pair[2]{\langle #1, #2 \rangle}
\newcommand\test{\mbox{\it If}}
\newcommand\lra{\longrightarrow}
\newcommand\irule[3]{\infer[\mbox{\footnotesize $#3$}]{#2}{#1}}
\newcommand\irulePi[3]{\infer*[\mbox{\footnotesize $#3$}]{#2}{#1}}
\newcommand{\plus}{\mathbin{\text{\normalfont\scalebox{0.7}{\faPlus}}}}
\begin{document}
\begin{frontmatter}
\title{A New Connective in Natural Deduction,\texorpdfstring{\\}{ }and its Application to Quantum Computing\texorpdfstring{\tnoteref{thanks}}{}}
\tnotetext[thanks]{Funded by STIC-AmSud 21STIC10, PIP 11220200100368CO, PICT-2019-1272, PICT-2021-I-A-00090, and the French-Argentinian IRP SINFIN.}

\author[UNQ,ICC]{Alejandro D\'iaz-Caro}
\ead{adiazcaro@icc.fcen.uba.ar}
\author[Inria,ENS]{Gilles Dowek}
\ead{gilles.dowek@ens-paris-saclay.fr}

\address[UNQ]{Departamento de Ciencia y Tecnolog\'ia, Universidad Nacional de Quilmes, Argentina}
\address[ICC]{Instituto de Ciencias de la Computaci\'on, CONICET--Universidad de Buenos Aires, Argentina}
\address[Inria]{Inria, France}
\address[ENS]{\'Ecole Normale Sup\'erieure Paris-Saclay, France}

\begin{abstract}
We investigate an unsuspected connection between logical connectives
with non-harmonious deduction rules, such as Prior's {\it tonk}, and
quantum computing. We argue that these connectives model the
information-erasure, the non-reversibility, and the non-determinism
that occur, among other places, in quantum measurement.  We introduce
an intuitionistic propositional logic with a non-harmonious logical
connective {\it sup}\/ and two interstitial rules, and show that the
proof language of this logic forms the core of a quantum programming
language.
\end{abstract}

  \begin{keyword}
    Proof-reduction
    \sep
    Lambda calculus
    \sep
    Type theory
    \sep
    Quantum computing
\MSC[2020] 03F05 \sep 68N18 \sep 03B38 \sep 03B70  \sep 81P68 
\end{keyword}

\end{frontmatter}

\section{Introduction}

A puzzling question in quantum physics is whether, in a quantum
superposition $\alpha \ket{\phi} + \beta \ket{\psi}$ of two states
$\ket \phi$ and $\ket \psi$, there is the state $\ket \phi$ {\it and}
the state $\ket \psi$ or the state $\ket \phi$ {\it or}\/ the state
$\ket \psi$.

Indeed, when we build such a superposition, that is when we prepare
it, we need to have $\ket \phi$ {\it and} $\ket \psi$, but when we use
this state, that is when we measure it, we get $\ket \phi$ {\it or}
$\ket \psi$. Thus, this superposition is similar to a conjunction when
we build it, but to a disjunction when we use it.  This discrepancy
between the way such a superposition is built and used is reminiscent
of the natural deduction rules of the non-harmonious connectives such
as Prior's {\it tonk}, and others.  We defend, in this paper, the
thesis that these non-harmonious connectives model the
information-erasure, the non-reversibility, and the non-determinism
that occur, among other places, in quantum measurement, while the
harmonious ones model information preservation, reversibility, and
determinism.

More concretely, after discussing the notion of harmonious and
non-harmonious deduction rules (Section~\ref{harmony}), we introduce
an intuitionistic propositional logic with a logical connective
$\odot$ (read: ``sup'', for ``superposition'') that has non-harmonious
deduction rules, we introduce a language of proof-terms for this
logic, the $\odot$-calculus (read: ``the sup-calculus''), and we prove
its main properties: subject reduction, the termination of
proof-reduction, the introduction property, and partial confluence
(Section~\ref{seclogic}).  These proofs mostly use standard techniques
with some specificities, to be adapted to this calculus.  We then
extend this calculus, introducing scalars to quantify the propensity
of a proof to reduce to another (Section~\ref{secquantifying}) and
show that this proof language contains the core of a quantum
programming language (Section~\ref{secquantum}).  Note that that
intuitionistic propositional logic with $\odot$ is not a logic to
reason about quantum programs. It is a logic whose propositions are
the types of quantum programs.

The idea to extend logic with a new connective for quantum
superposition has already been investigated, for example in
\cite{suplogic17,suplogic19}. While these papers address the question
of the models of such a logic, we address that of the dynamic of its
proofs.

A preliminary version of this paper has been published in the
proceedings of the {\it International Colloquium on Theoretical
  Aspects of Computing}, 2021. In this journal version, we have
replaced the symbol $\delta_{\odot}^{\parallel}$ with the symbols
$\elimsup^1$, $\elimsup^2$, clarifying the two-face nature of the
connective $\odot$.  We have also introduced an elimination rule for
the symbol $\top$. Such a rule is often considered as redundant, but
it fully makes sense in natural deduction with generalized elimination
rules, and even more in a proof system with scalars, such as that of
Section~\ref{secquantifying}.  Besides providing the complete proofs
of all theorems, we investigate, in this paper, the confluence of the
deterministic part of the calculus, that was not addressed in the
conference version.  To our surprise, the system without scalars was
confluent, but the system with scalars was not. This led us to modify
the treatment of scalars and the definition of matrices to make this
system confluent. Finally, the conference version of the paper only
addressed quantum algorithms on one and two qubits. In this version,
we have generalized this to arbitrary quantum algorithms, leading to a
more systematic treatment of vectors, matrices, and measurement.

\section{Harmony and excessiveness}
\label{harmony}

\subsection{Logical connectives with insufficient, harmonious, and excessive
  deduction rules}

In natural deduction, to prove a proposition of the form $A
\vartriangle B$, where $\vartriangle$ is an arbitrary connective, the
introduction rules of this connective require to prove some
propositions.  Then, when we have a proof of $A \vartriangle B$ and we
want to prove a proposition $C$, the generalized elimination rules
\cite{SchroederHeister,Dyckhoff,Parigot,LopezEscobar,Plato,NegriPlato}
of the connective $\vartriangle$ provide some hypotheses for proving
this proposition. In general, the propositions required by the
introduction rules and those provided by the elimination rules are the
same.

For example, to prove the proposition $A \wedge B$, the introduction
rule of the conjunction 
$$\irule{\Gamma \vdash A & \Gamma \vdash B}
        {\Gamma \vdash A \wedge B}
        {\mbox{$\wedge$-i}}$$
requires proofs of $A$ and of $B$ and, to prove a proposition $C$, the
generalized elimination rules of the conjunction
$$\irule{\Gamma \vdash A \wedge B & \Gamma, A \vdash C}
        {\Gamma \vdash C}
        {\mbox{$\wedge$-e1}}
\qquad
\irule{\Gamma \vdash A \wedge B & \Gamma, B \vdash C}
        {\Gamma \vdash C}
        {\mbox{$\wedge$-e2}}$$
provide the same propositions $A$ and $B$.
This 
principle of inversion, or of harmony, has been introduced by Gentzen
\cite{Gentzen} and developed, among others, by Prawitz \cite{Prawitz,PrawitzEssay}, Dummett \cite{Dummett}, and Schoeder-Heister \cite{SchroederHeister,SchroederHeister2014} for natural deduction, by Miller and Pimentel \cite{MillerPimentel} for sequent calculus, and by Read \cite{Read04,Read10,Read} and Martin-L\"of \cite{MartinLof} for the rules of equality.
It enables the definition of
a reduction process, where the proof
$$\irule{\irule{\irulePi{\pi_1}{\Gamma \vdash A}{}
                &
                \irulePi{\pi_2}{\Gamma \vdash B}{}
               }
               {\Gamma \vdash A \wedge B}
               {\mbox{$\wedge$-i}}
         &
         \irulePi{\pi_3}{\Gamma, A \vdash C}{}
        }
        {\Gamma \vdash C}
        {\mbox{$\wedge$-e1}}$$
reduces to $(\pi_1/A)\pi_3$, that is the proof $\pi_3$ where
every occurrence of the rule axiom with the proposition $A$ has been
replaced with the proof $\pi_1$. And, similarly, the proof 
$$\irule{\irule{\irulePi{\pi_1}{\Gamma \vdash A}{}
                &
                \irulePi{\pi_2}{\Gamma \vdash B}{}
               }
               {\Gamma \vdash A \wedge B}
               {\mbox{$\wedge$-i}}
         &
         \irulePi{\pi_3}{\Gamma, B \vdash C}{}
        }
        {\Gamma \vdash C}
        {\mbox{$\wedge$-e2}}$$
reduces to $(\pi_2/B)\pi_3$.

This property that the elimination rules provide exactly the
propositions required by the introduction rules can be split into two
properties, that it provides no more and no less (called ``harmony''
and ``stability'' in \cite{JacintoReadSL17}).

It is of course possible to imagine other deduction rules that do not
respect this harmony principle, either because the introduction rules
do not require some propositions, that are provided by the
elimination rules (such rules may be called {\it insufficient}\/)
$$\irule{\Gamma \vdash A}
        {\Gamma \vdash A \smile B}
        {\mbox{$\smile$-i}}$$
$$\irule{\Gamma \vdash A \smile B & \Gamma, A, B \vdash C}
        {\Gamma \vdash C}
        {\mbox{$\smile$-e}}$$
or because the introduction rules require some propositions, that are
not provided by the elimination rules (such rules may be called {\em
  excessive}\/)
$$\irule{\Gamma \vdash A & \Gamma \vdash B}
        {\Gamma \vdash A \frown B}
        {\mbox{$\frown$-i}}$$
$$\irule{\Gamma \vdash A \frown B & \Gamma, A \vdash C}
        {\Gamma \vdash C}
        {\mbox{$\frown$-e}}$$
or both, such as Prior's {\it tonk} \cite{Prior} 
$$\irule{\Gamma \vdash A}
        {\Gamma \vdash A~\mbox{\it tonk}~B}
        {\mbox{{\it tonk}-i}}$$
$$\irule{\Gamma \vdash A~\mbox{\it tonk}~B & \Gamma, B \vdash C}
        {\Gamma \vdash C}
        {\mbox{{\it tonk}-e}}$$

More generally, we can associate to each set of introduction rules a
{\it required}\/ proposition. When an introduction rule has several
premisses, we take the conjunction of these premises, and when there
are several such introduction rules, we take the disjunction of their
premises.  In the same way, we can associate to each set of
elimination rules a {\it provided}\/ proposition. As the provided
propositions occur on the left of the sequent, when an elimination
rule has several premisses, we take their disjunction,
and when there are several such elimination rules, we take the
conjunction of their premises.
$$\begin{array}{|c|c|c|}
	  \hline
	  & \mbox{required} & \mbox{provided} \\
	  \hline
	  \wedge & \mbox{$A$ and $B$}  & \mbox{$A$ and $B$}\\
	  \hline
	  \vee & \mbox{$A$ or $B$} & \mbox{$A$ or $B$}\\
	  \hline
	  \smile & \mbox{$A$} & \mbox{$A$ and $B$}\\
	  \hline
	  \frown & \mbox{$A$ and $B$} & \mbox{$A$}\\
	  \hline
	  \mbox{\it tonk} & \mbox{$A$} & \mbox{$B$}\\
	  \hline
\end{array}$$
A set of rules is {\it harmonious}\/ when the required and the provided
propositions are the equivalent. It is {\it excessive}\/ when the required
proposition implies the provided proposition, but not the converse. It is
{\it insufficient}\/ when the provided proposition implies the required one,
but not the converse.  The connective $\wedge$ and $\vee$ are thus
harmonious. The connective $\smile$ is insufficient. The connective $\frown$
is excessive. And, as both $A \Rightarrow B$ and $B \Rightarrow A$ are
unprovable, the connective {\it tonk}\/ is neither.

When a set of rule is excessive, a proof reduction process can still
be defined, for example the proof
$$\irule{\irule{\irulePi{\pi_1}{\Gamma \vdash A}{}
                &
                \irulePi{\pi_2}{\Gamma \vdash B}{}
               }
               {\Gamma \vdash A \frown B}
               {\mbox{$\frown$-i}}
          &
               \irulePi{\pi_3}{\Gamma, A \vdash C}{}
        }
        {\Gamma \vdash C}
        {\mbox{$\frown$-e}}$$
can still be reduced to $(\pi_1/A)\pi_3$.

Another example is the connective $\odot$ whose introduction rule 
$$\irule{\Gamma \vdash A & \Gamma \vdash B}
        {\Gamma \vdash A \odot B}
        {\mbox{$\odot$-i}}$$
similar to that of the conjunction, requires $A$ and $B$ and whose
elimination rule
$$\irule{\Gamma \vdash A \odot B & \Gamma, A \vdash C & \Gamma, B \vdash C}
        {\Gamma \vdash C}
        {\mbox{$\odot$-e}}$$
similar to that of the disjunction, provides $A$ or $B$.
In this case also, proofs can be reduced. Moreover, several proof reduction
processes can be defined, exploiting, in different ways,
the excess of the deduction rules. For example, the proof
$$\irule{\irule{\irulePi{\pi_1}{\Gamma \vdash A}{}
                &
                \irulePi{\pi_2}{\Gamma \vdash B}{}
               }
               {\Gamma \vdash A \odot B}
               {\mbox{$\odot$-i}}
          &
          \irulePi{\pi_3}
                {\Gamma, A \vdash C}{} & \irulePi{\pi_4}{\Gamma, B \vdash C}
                {}
        }
        {\Gamma \vdash C}
        {\mbox{$\odot$-e}}$$
can be reduced to $(\pi_1/A)\pi_3$, it can be reduced to
$(\pi_2/B)\pi_4$, it also can be reduced,
in a non-deterministic way, 
either to $(\pi_1/A)\pi_3$ or to $(\pi_2/B)\pi_4$.

This notion of harmony can also be extended to the rules of
quantifiers.  Then, the quantifier $\faex$, whose introduction rule
$$\irule{\Gamma \vdash A}
        {\Gamma \vdash \faex x~A}
        {\mbox{$\faex$-i $x$ not free in $\Gamma$}}$$
similar to that of the universal quantifier, requires a proof of $A$
for all $x$ and whose elimination rule
$$\irule{\Gamma \vdash \faex x~A & \Gamma, A \vdash C}
        {\Gamma \vdash C}
        {\mbox{$\faex$-e $x$ not free in $\Gamma, C$}}$$
similar to that of the existential quantifier, provides a proof of $A$
for some $x$, has excessive deduction rules.  The proof
$$\irule{\irule{\irulePi{\pi_1}{\Gamma \vdash A}{}
               }
               {\Gamma \vdash \faex x~A}
               {\mbox{$\faex$-i}}
          &
               \irulePi{\pi_2}{\Gamma, A \vdash C}{}
        }
        {\Gamma \vdash C}
        {\mbox{$\faex$-e}}$$
can be reduced, in a non-deterministic way, to
$((t/x)\pi_1/A)(t/x)\pi_2$, for any term $t$.

The quantifier $\nabla$ \cite{MillerTiu}, defined in sequent calculus
rather than natural deduction, may also be considered as a quantifier
with excessive deduction rules, as it has the right rule of the
universal quantifier and the left rule of the existential one. But it
involves a clever management of variable scoping, which we do not
address here.

\subsection{Mixing excessiveness and harmony}

The rules
$$\irule{\Gamma \vdash A & \Gamma \vdash B}
        {\Gamma \vdash A \odot B}
        {\mbox{$\odot$-i}}$$

$$\irule{\Gamma \vdash A \odot B & \Gamma, A \vdash C & \Gamma, B \vdash C}
        {\Gamma \vdash C}
        {\mbox{$\odot$-e}}$$
are excessive. 

But, we can add another set of elimination rules for the connective $\odot$,
similar to those of conjunction
$$\irule{\Gamma \vdash A\odot B & \Gamma, A \vdash C}
        {\Gamma \vdash C}
        {\mbox{$\odot$-e1}}$$
$$\irule{\Gamma \vdash A\odot B & \Gamma, B \vdash C}
        {\Gamma \vdash C}
        {\mbox{$\odot$-e2}}$$
Then, the connective $\odot$, with its four rules $\odot$-i,
$\odot$-e, $\odot$-e1, and $\odot$-e2, appears as a two-face connective:
the subset of its deduction rules $\{\mbox{$\odot$-i},
\mbox{$\odot$-e}\}$ is excessive, while the subset
$\{\mbox{$\odot$-i}, \mbox{$\odot$-e1}, \mbox{$\odot$-e2}\}$ is
harmonious.
Note that the rules $\{\mbox{$\odot$-i}, \mbox{$\odot$-e1},
\mbox{$\odot$-e2}\}$ are exactly those of the conjunction.

\subsection{Information loss}

We say that an occurrence of a sub-proof $\pi_1$ of a proof $\pi$ is
{\it accessible}, if there exists a context $\kappa$ such that
$\kappa\{\pi_X\}$, where $\pi_X$ is obtained by replacing this
occurrence of $\pi_1$ with a variable $X$, reduces to $X$.

For example, the occurrence of $\pi_1$ in the proof 
$$\irule{\irulePi{\pi_1}{\Gamma \vdash A}{}
          &
         \irulePi{\pi_2}{\Gamma \vdash B}{}
        }
        {\Gamma \vdash A \wedge B}
        {\mbox{$\wedge$-i}}$$
is accessible, as putting the proof
$$\irule{\irulePi{X}{\Gamma \vdash A}{}
          &
         \irulePi{\pi_2}{\Gamma \vdash B}{}
        }
        {\Gamma \vdash A \wedge B}
	{\mbox{$\wedge$-i}}$$
in the context
$$\irule{\irulePi{\{\}}
               {\Gamma \vdash A \wedge B}
               {}
         &
         \irule{}{\Gamma, A \vdash A}{\mbox{axiom}}
        }
        {\Gamma \vdash A}
	{\mbox{$\wedge$-e1}}$$
yields the proof
$$\irule{\irule{\irulePi{X}{\Gamma \vdash A}{}
                &
                \irulePi{\pi_2}{\Gamma \vdash B}{}
               }
               {\Gamma \vdash A \wedge B}
               {\mbox{$\wedge$-i}}
         &
         \irule{}{\Gamma, A \vdash A}{\mbox{axiom}}
        }
        {\Gamma \vdash A}
        {\mbox{$\wedge$-e1}}$$
that reduces to $X$.
In other words, the rule $\wedge$-i puts the proofs $\pi_1$ and $\pi_2$
in a box, but the box can be opened and the proofs can be taken out
of it.

With harmonious deduction rules, when a proof is built with an
introduction rule, the proofs of its premises remain accessible.
We call this property {\it information preservation}. 
The
situation is different with excessive deduction rules: the excess of
information, required by the introduction rule, and not returned by
the elimination rule in the form of an extra hypothesis in the
required proof of $C$ is lost.  For example, the occurrence of $\pi_2$
in the proof
$$\irule{\irulePi{\pi_1}{\Gamma \vdash A}{}
              &
         \irulePi{\pi_2}{\Gamma \vdash B}{}
        }
        {\Gamma \vdash A \frown B}
        {\mbox{$\frown$-i}}$$
is inaccessible as there is no context such that putting the proof 
$$\irule{\irulePi{\pi_1}{\Gamma \vdash A}{}
              &
         \irulePi{X}{\Gamma \vdash B}{}
        }
        {\Gamma \vdash A \frown B}
        {\mbox{$\frown$-i}}$$
in that context yields a proof that reduces to $X$.
Again, the rule $\frown$-i puts the proofs $\pi_1$ and $\pi_2$ in
a box, the box can be partially opened and the proof $\pi_1$ can be
taken out of it, but not the proof $\pi_2$, that is inaccessible.
The information it contains is lost.

The accessibility of the occurrences of $\pi_1$ and $\pi_2$ of the proof 
$$\irule{\irulePi{\pi_1}{\Gamma \vdash A}{}
           & 
         \irulePi{\pi_2}{\Gamma \vdash B}{}
        }
        {\Gamma \vdash A \odot B}
        {\mbox{$\odot$-i}}$$
depends on the elimination rules we allow in the context and on the
way we reduce the proof
$$\irule{\irule{\irulePi{\pi_1}{\Gamma \vdash A}{}
    &
    \irulePi{\pi_2}{\Gamma \vdash B}{}
  }
  {\Gamma \vdash A \odot B}
  {\mbox{$\odot$-i}}
  &
  \irulePi{\pi_3}
  {\Gamma, A \vdash C}{} & \irulePi{\pi_4}{\Gamma, B \vdash C}
  {}
}
{\Gamma \vdash C}
{\mbox{$\odot$-e}}$$
If we allow the rule $\odot$-e in the context, but neither $\odot$-e1
nor $\odot$-e2, and reduce this proof systematically to
$(\pi_1/A)\pi_3$, then $\pi_1$ is accessible, but $\pi_2$ is not.  If
we reduce it systematically to $(\pi_2/B)\pi_4$, then $\pi_2$ is
accessible, but $\pi_1$ is not.  If we reduce it,
in a non-deterministic way,
to $(\pi_1/A)\pi_3$ or to $(\pi_2/B)\pi_4$, then both $\pi_1$ and
$\pi_2$ are accessible, but in a non-deterministic way.  If we allow
the rule $\odot$-e1 and $\odot$-e2 in the context, then both proofs
are accessible.
Once more, the rule $\odot$-i puts the proofs $\pi_1$ and $\pi_2$ in
a box, whether the box can be opened and the proofs
taken out of it, depends on the tools we use to open it.

When a connective has non-harmonious deduction rules, its introduction
rules alone do not define its meaning, and neither do the elimination
rules alone. The discrepancy between the meaning conferred by the
introduction rules and the elimination rules, and the information loss
it implies, are part of the meaning of such a connective.

While connectives with harmonious deduction rules model
information-preservation, reversibility, and determinism, those with
excessive deduction rules model information-erasure,
non-reversibility, and non-determinism.  In particular, the
elimination rules $\odot$-e1 and $\odot$-e2 will be used to model
information-preservation, reversibility, and determinism, while the
elimination rule $\odot$-e will be used to model information-erasure,
non-reversibility, and non-determinism.

Such information-erasure, non-reversibility, and non-determinism,
occur, for example, in quantum physics, where the measurement of the
superposition of two states does not yield both states back.

\subsection{Quantum computing and quantum languages}

In classical computing, the simplest datatype is that of bits, that
contains two elements $0$ and $1$.  In quantum computing, (see
\cite{NC} for a more comprehensive introduction) this datatype is
replaced with that of qubits, that contains all linear combinations
$\alpha \ket 0 + \beta \ket 1$, of $0$ and $1$, then written $\ket 0$
and $\ket 1$, with complex coefficients $\alpha$ and $\beta$ such that
$|\alpha|^2 + |\beta|^2 = 1$. Thus, this datatype of qubits contains
the vectors of norm $1$ of the vector space $\mathbb C^2$.

In classical computing, a more complex datatype is that of $n$-bits.
The elements of this datatype are $n$-tuples formed with $0$ and $1$,
for instance the datatype of $3$-bits contains the $8$ elements $000$,
$001$, $010$, $011$, $100$, $101$, $110$, and $111$.  In quantum
computing, this datatype is replaced with that of $n$-qubits, that
contains all linear combinations of norm $1$ of these tuples.  The
elements of this datatype are thus vectors of norm $1$ of $\mathbb
C^{2^n}$.  The canonical basis of $\mathbb C^{2^n}$ is made of the
$2^n$ tuples formed with $0$ and $1$. For instance the canonical basis
of $\mathbb C^8$ is $\{\ket{000}, \ket{001}, \ket{010}, \ket{011},
\ket{100}, \ket{101}, \ket{110}, \ket{111}\}$.  These vectors can also
be written $\{\ket{0}, \ket{1}, \ket{2}, \ket{3}, \ket{4}, \ket{5},
\ket{6}, \ket{7}\}$ identifying a natural number with its binary
notation.  More generally, the datataypes contain the vectors of norm
$1$ of some Hilbert space.

Quantum algorithms are formed with two ingredients: unitary
transformations, that are linear transformations preserving inner
product, and measurements, that are probabilistic operations. It is in
fact enough to consider the projection to the canonical basis which
acts as follows: the measurement of an arbitrary qubit
$\ket\psi=\alpha\ket 0+\beta\ket 1$ yields the result $0$ and
transforms the qubit into $\ket 0$ with probability $|\alpha|^2$ and
yields the result $1$ and transform the qubit into $\ket 1$ with
probability $|\beta|^2$. More generally, the measurement of the first
qubit of the $n$-qubit $\sum_{i=0}^{2^n-1}\alpha_i\ket i$ yields $0$
and transforms it into
$$\sum_{i=0}^{2^{n-1}-1}\frac{\alpha_i}{\|\sum_{i=0}^{2^{n-1}-1}\alpha_i\ket
  i\|}\ket i$$
with probability $\sum_{i=0}^{2^{n-1}-1}|{\alpha_i}|^2$, and yields
$1$ and transforms it into
$$\sum_{i=2^{n-1}}^{2^{n}-1}\frac{\alpha_i}{\|\sum_{i=2^{n-1}}^{2^{n}-1}\alpha_i\ket
  i\|}\ket i$$
with probability $\sum_{i=2^{n-1}}^{2^{n}-1}|{\alpha_i}|^2$.

Measuring any other of the $n$-qubits involves the
swapping unitary transformation and consist into swapping each qubit
to measure with the first one, perform the measurement, and swapping
it back.

Several programming languages have been proposed to express quantum
algorithms, for example
\cite{AltenkirchGrattageLICS05,SelingerValironMSCS06,ZorziMSCS16,Lineal,ZXBook17,LambdaS,DiazcaroGuillermoMiquelValironLICS19}.
The design of such quantum programming languages raises two main
questions. The first is to take into account the linearity of the
unitary operators and, for instance, avoid cloning, and the second is
to express the information-erasure, non-reversibility, and
non-determinism of the measurement. The $\odot$ connective gives a new
solution to this second problem. Qubits can be seen as proofs of the
proposition $\top \odot \top$, in contrast with bits which are proofs
of $\top\vee\top$, and measurement can be easily expressed with the
elimination rule $\odot$-e (Section~\ref{measurement}).

In previous works, we have attempted to formalize superposition and
measurement in the $\lambda$-calculus.  The calculus Lambda-${\mathcal
  S}$~\cite{LambdaS} contains a primitive constructor $+$ and a
primitive measurement symbol $\pi$, together with a rule reducing $\pi
(t + u)$, in a non-deterministic way, to $t$ or to $u$.
The superposition $t + u$ can also 
be considered as the pair $\pair{t}{u}$.  Hence, it should have the
type $A \wedge A$. In other words, it should be a proof of the
proposition $A \wedge A$. In System I \cite{SystemI}, various
type-isomorphisms have been taken as identities, in particular the
commutativity isomorphism $A \wedge B \equiv B \wedge A$, hence $t + u
\equiv u + t$. In such a system, where $A \wedge B$ and $B \wedge A$
are identical, it is not possible to define the two elimination rules
as the two usual projections rules $\pi_1$ and $\pi_2$ of the
$\lambda$-calculus. They were replaced with a single projection
parametrized with a proposition $A$: $\pi_A$, such that if $t:A$ and
$u:B$ then $\pi_A(t + u)$ reduces to $t$ and $\pi_B(t + u)$ to $u$.
When $A = B$, hence $t$ and $u$ both have type $A$, the proof-term
$\pi_A(t + u)$ reduces,
in a non-deterministic way,
to $t$ or to $u$, like 
a measurement operator.

These works on Lambda-${\mathcal S}$ and System I brought to light the
fact that the pair superposition / measurement, in a quantum
programming language, behaves like a pair introduction / elimination,
for some connective, in a proof language, as the succession of a
superposition and a measurement yields a term that can be reduced.  In
System I, this connective was assumed to be a commutative conjunction,
with a modified elimination rule, leading to a non-deterministic
reduction.

But, as the measurement of the superposition of two states does not
yield both states back, this connective should probably be
excessive. Moreover, as, to prepare the superposition $\alpha.\ket 0 + \beta.\ket 1$,
we need both $\ket 0$ and $\ket 1$ and the measurement in
the basis $\ket 0, \ket 1$ yields either $\ket 0$ or $\ket 1$, this
connective should have the introduction rule of the conjunction, and
the elimination rule of the disjunction. Hence, it should be the
connective $\odot$.

\section{Intuitionistic propositional logic with \texorpdfstring{$\odot$}{O}}
    \label{seclogic}
\begin{figure}[t]
$$\irule{}
        {\Gamma \vdash A}
        {\mbox{axiom~$A\in\Gamma$}}
  \qquad
  \irule{\Gamma \vdash A & \Gamma \vdash A}
        {\Gamma \vdash A}
        {\mbox{sum}}
  \qquad
  \irule{\Gamma \vdash A}
        {\Gamma \vdash A}
        {\mbox{prod}}$$
$$\irule{}
        {\Gamma \vdash \top}
        {\mbox{$\top$-i}}
  \qquad
  \irule{\Gamma \vdash \top & \Gamma \vdash C}
        {\Gamma \vdash C}
        {\mbox{$\top$-e}}
  \qquad
  \irule{\Gamma \vdash \bot}
        {\Gamma \vdash C}
        {\mbox{$\bot$-e}}$$
$$\irule{\Gamma, A \vdash B}
        {\Gamma \vdash A \Rightarrow B}
        {\mbox{$\Rightarrow$-i}}
  \qquad\qquad
  \irule{\Gamma \vdash A\Rightarrow B & \Gamma \vdash A}
        {\Gamma \vdash B}
        {\mbox{$\Rightarrow$-e}}$$
$$\irule{\Gamma \vdash A & \Gamma \vdash B}
        {\Gamma \vdash A \wedge B}
        {\mbox{$\wedge$-i}}
  \qquad
  \irule{\Gamma \vdash A \wedge B & \Gamma, A \vdash C}
        {\Gamma \vdash C}
        {\mbox{$\wedge$-e1}}
  \qquad
  \irule{\Gamma \vdash A \wedge B & \Gamma, B \vdash C}
        {\Gamma \vdash C}
        {\mbox{$\wedge$-e2}}$$
$$\irule{\Gamma \vdash A}
        {\Gamma \vdash A \vee B}
        {\mbox{$\vee$-i1}}
  \qquad\quad
  \irule{\Gamma \vdash B}
        {\Gamma \vdash A \vee B}
        {\mbox{$\vee$-i2}}
  \qquad\quad
  \irule{\Gamma \vdash A \vee B & \Gamma, A \vdash C & \Gamma, B \vdash C}
        {\Gamma \vdash C}
        {\mbox{$\vee$-e}}$$
$$\irule{\Gamma \vdash A & \Gamma \vdash B}
        {\Gamma \vdash A \odot B}
        {\mbox{$\odot$-i}}
  \qquad\qquad
  \irule{\Gamma \vdash A \odot B & \Gamma, A \vdash C & \Gamma, B \vdash C}
        {\Gamma \vdash C}
        {\mbox{$\odot$-e}}$$
$$\irule{\Gamma \vdash A \odot B & \Gamma, A \vdash C}
        {\Gamma \vdash C}
        {\mbox{$\odot$-e1}}
  \qquad\qquad
  \irule{\Gamma \vdash A \odot B & \Gamma, B \vdash C}
        {\Gamma \vdash C}
        {\mbox{$\odot$-e2}}$$

\caption{The deduction rules of intuitionistic propositional 
          logic with $\odot$\label{figdeductionrules}}
\end{figure}

We consider an intuitionistic propositional logic with the usual
connectives $\top$, $\bot$, $\Rightarrow$, $\wedge$, and $\vee$
(as usual, negation is defined as $\neg A = (A \Rightarrow \bot)$),
and the extra connective $\odot$.
The syntax of the propositions of this logic is
$$A = \top \mid \bot \mid A \Rightarrow A \mid A \wedge A \mid A \vee
A \mid A \odot A$$
For
simplicity, we have not included propositional constants in this
syntax, so $\top \Rightarrow \top$ is a proposition, but $P
\Rightarrow P$ is not.  Adding such constants only requires minors
modifications in the proofs below.

The deduction rules are given in Figure~\ref{figdeductionrules}.  We use the
generalized elimination rules for all connectives, except implication, for
which the elimination rule is the usual {\it modus ponens}. This yields the
usual application in the language of proof-terms, making the examples more
readable.

The rules axiom, $\top$-i, $\top$-e,
$\bot$-e, $\Rightarrow$-i, $\Rightarrow$-e,
$\wedge$-i, $\wedge$-e1, $\wedge$-e2,
$\vee$-i1, $\vee$-i2, and $\vee$-e are the usual rules of 
intuitionistic propositional logic.
The rules $\odot$-i, $\odot$-e, $\odot$-e1, and $\odot$-e2 are those of
the connective $\odot$.
We also added two rules 
$$\irule{\Gamma \vdash A & \Gamma \vdash A}{\Gamma \vdash A}{\mbox{sum}}$$
$$\irule{\Gamma \vdash A} {\Gamma \vdash A} {\mbox{prod}}$$
whose premises are identical to their conclusion.  Although these
rules are logically trivial, they introduce constructors in the proof
language that will be of key importance when we extend the calculus
with scalars, in Section~\ref{secquantifying}.
These rules are called {\it interstitial}\/ as, as we shall see,
they can create an interstice
between the introduction rule of some connective and its
elimination rule.

\subsection{Proof reduction}

The reduction rules
in this logic are the usual ones for the
connectives $\Rightarrow$, $\wedge$, and $\vee$, except that we use the
generalized elimination rules for the conjunction.
$$\vcenter{\irule{\irule{\irulePi{\pi_1}{\Gamma, A\vdash B}{}}
               {\Gamma \vdash A \Rightarrow B}
               {\mbox{$\Rightarrow$-i}}
          &
               \irulePi{\pi_2}{\Gamma \vdash A}{}
        }
        {\Gamma \vdash B}
	{\mbox{$\Rightarrow$-e}}}
	\qquad
	\textrm{that reduces to}\qquad(\pi_2/A)\pi_1$$
$$\vcenter{\irule{\irule{\irulePi{\pi_1}{\Gamma \vdash A}{}
                 & 
                \irulePi{\pi_2}{\Gamma \vdash B}{}
               }
               {\Gamma \vdash A \wedge B}
               {\mbox{$\wedge$-i}}
         &
         \irulePi{\pi_3}{\Gamma, A \vdash C}{}
        }
        {\Gamma \vdash C}
	{\mbox{$\wedge$-e1}}}
	\qquad
	\textrm{that reduces to}\qquad(\pi_1/A)\pi_3$$
 $$\vcenter{\irule{\irule{\irulePi{\pi_1}{\Gamma \vdash A}{}
                 & 
                \irulePi{\pi_2}{\Gamma \vdash B}{}
               }
               {\Gamma \vdash A \wedge B}
               {\mbox{$\wedge$-i}}
         &
         \irulePi{\pi_3}{\Gamma, B \vdash C}{}
        }
        {\Gamma \vdash C}
	{\mbox{$\wedge$-e2}}}
	\qquad
	\textrm{that reduces to}\qquad(\pi_2/B)\pi_3$$
$$\vcenter{\irule{\irule{\irulePi{\pi_1}{\Gamma \vdash A}{}}
               {\Gamma \vdash A \vee B}
               {\mbox{$\vee$-i1}}
          &
               \irulePi{\pi_2}{\Gamma, A \vdash C}{}
               &
               \irulePi{\pi_3}{\Gamma, B \vdash C}{}
        }
        {\Gamma \vdash C}
	{\mbox{$\vee$-e}}}
	\qquad
	\textrm{that reduces to}\qquad(\pi_1/A)\pi_2$$
and
$$\vcenter{\irule{\irule{\irulePi{\pi_1}{\Gamma \vdash B}{}}
               {\Gamma \vdash A \vee B}
               {\mbox{$\vee$-i2}}
          &
               \irulePi{\pi_2}{\Gamma, A \vdash C}{}
               &
               \irulePi{\pi_3}{\Gamma, B \vdash C}{}
        }
        {\Gamma \vdash C}
	{\mbox{$\vee$-e}}}
	\qquad
	\textrm{that reduces to}\qquad(\pi_1/B)\pi_3$$
For the connective $\top$, we have decided to take an elimination rule. 
This introduces a notion of reducible proof for the connective $\top$        
$$\vcenter{\irule{\irule{}
                        {\Gamma \vdash \top}
                        {\mbox{$\top$-i}}
                  &
                  \irulePi{\pi}{\Gamma \vdash C}{}
                 }
                 {\Gamma \vdash C}
	         {\mbox{$\top$-e}}}
	\qquad
	\textrm{and it reduces to}\qquad\vcenter{\irule{\irulePi{\pi}{\Gamma \vdash C}{}}{\Gamma \vdash C}{\mbox{prod}}}$$
that is to $\pi$ with an added interstitial rule. The role of this
prod rule will be made clear when we add scalars.
       
The reduction rules of the connective $\odot$ are, as we have seen
$$\vcenter{\irule{\irule{\irulePi{\pi_1}{\Gamma \vdash A}{}
                &
                \irulePi{\pi_2}{\Gamma \vdash B}{}
               }
               {\Gamma \vdash A \odot B}
               {\mbox{$\odot$-i}}
          &
          \irulePi{\pi_3}
                {\Gamma, A \vdash C}{} & \irulePi{\pi_4}{\Gamma, B \vdash C}
                {}
        }
        {\Gamma \vdash C}
        {\mbox{$\odot$-e}}} \qquad
	\textrm{that reduces to}\qquad
        (\pi_1/A)\pi_3 \textrm{ and } (\pi_2/B)\pi_4$$
in a non-deterministic way, erasing some information
$$\vcenter{\irule{\irule{\irulePi{\pi_1}{\Gamma \vdash A}{}
                &
                \irulePi{\pi_2}{\Gamma \vdash B}{}
               }
               {\Gamma \vdash A \odot B}
               {\mbox{$\odot$-i}}
          &
          \irulePi{\pi_3}
                {\Gamma, A \vdash C}
                {}
        }
        {\Gamma \vdash C}
        {\mbox{$\odot$-e1}}}
\qquad \textrm{that reduces to} \qquad (\pi_1/A)\pi_3$$
and
$$\vcenter{\irule{\irule{\irulePi{\pi_1}{\Gamma \vdash A}{}
                &
                \irulePi{\pi_2}{\Gamma \vdash B}{}
               }
               {\Gamma \vdash A \odot B}
               {\mbox{$\odot$-i}}
          &
          \irulePi{\pi_3}
                {\Gamma, B \vdash C}
                {}
        }
        {\Gamma \vdash C}
        {\mbox{$\odot$-e2}}} \qquad
\textrm{that reduces to} \qquad (\pi_2/A)\pi_3$$

Finally, adding the interstitial rules, permits to build proofs that cannot be
reduced, because the sum rule or the prod rule creates an interstice
between the introduction rule of some connective and its
elimination rule.  For example
$$\irule{\irule{\irule{\irulePi{\pi_1}{\Gamma \vdash A}{}
                 & 
                \irulePi{\pi_2}{\Gamma \vdash B}{}}
               {\Gamma \vdash A \wedge B}
               {\mbox{$\wedge$-i}}
         &
         \irule{\irulePi{\pi_3}{\Gamma \vdash A}{}
                &
                \irulePi{\pi_4}{\Gamma \vdash B}{}
         }
         {\Gamma \vdash A \wedge B}
         {\mbox{$\wedge$-i}}
        }
        {\Gamma \vdash A \wedge B}
        {\mbox{sum}}
        &
        \irulePi{\pi_5}{\Gamma, A \vdash C}{}
        }
        {\Gamma \vdash C}
        {\mbox{$\wedge$-e1}}$$
Reducing such a proof requires rules to commute the rule sum either
with the elimination rule below or with the introduction rules above.

As the commutation with the introduction rules
above is not always possible, for example in the proof
$$\irule{\irule{\irulePi{\pi_1}{\Gamma \vdash A}{}}
               {\Gamma \vdash A \vee B}
               {\mbox{$\vee$-i1}}
         &
         \irule{\irulePi{\pi_2}{\Gamma \vdash B}{}}
         {\Gamma \vdash A \vee B}
         {\mbox{$\vee$-i2}}
        }
        {\Gamma \vdash A \vee B}
        {\mbox{sum}}$$
the commutation with the elimination rule below is often preferred.
In this paper, we favour the commutation of the interstitial rules with
the introduction rules, rather than with the elimination rules,
whenever it is possible, that is for all connectives except
disjunction. For example the proof
$$\irule{\irule{\irulePi{\pi_1}{\Gamma \vdash A}{}
                &
                \irulePi{\pi_2}{\Gamma \vdash B}{}}
               {\Gamma \vdash A \wedge B}
               {\mbox{$\wedge$-i}}
         &
         \irule{\irulePi{\pi_3}{\Gamma \vdash A}{}
                &
                \irulePi{\pi_4}{\Gamma \vdash B}{}
         }
         {\Gamma \vdash A \wedge B}
         {\mbox{$\wedge$-i}}
        }
        {\Gamma \vdash A \wedge B}
        {\mbox{sum}}$$
reduces to 
$$\irule{\irule{\irulePi{\pi_1}{\Gamma \vdash A}{}
                &
                \irulePi{\pi_3}{\Gamma \vdash A}{}}
               {\Gamma \vdash A}
               {\mbox{sum}}
         &
         \irule{\irulePi{\pi_2}{\Gamma \vdash B}{}
                &
                \irulePi{\pi_4}{\Gamma \vdash B}{}
         }
         {\Gamma \vdash B}
         {\mbox{sum}}
        }
        {\Gamma \vdash A \wedge B}
        {\mbox{$\wedge$-i}}$$
Such a commutation yields a stronger introduction property for the
considered connective (Theorem~\ref{introductions}).

In the proof 
$$\irule{\irule{\irulePi{\pi}{\Gamma \vdash A}{}}
               {\Gamma \vdash A \vee B}
               {\mbox{$\vee$-i1}}
        }
        {\Gamma \vdash A \vee B}
        {\mbox{prod}}$$
the prod rule and the $\vee$-i1 rule can be commuted.  For
coherence, we have decided to commute both the sum rule and the prod
rule with the elimination rule of the disjunction, rather that with
its introduction rules, but both choices are possible.

\subsection{Proof-terms}

We introduce a term language, the $\odot$-calculus, for the proofs of
this logic. Its syntax is 
\begin{align*}
  t =~& x \mid t \plus t \mid \bullet t \mid \star
  \mid \elimtop(t,t) \mid \elimbot(t)\\
        &\mid \lambda \abstr{x}t\mid t~t
  \mid \pair{t}{t} \mid \elimand^1(t,\abstr{x}t)
  \mid \elimand^2(t,\abstr{x}t)
  \\
        &\mid \inl(t)\mid \inr(t) \mid \elimor(t,\abstr{x}t,\abstr{x}t)\\
  &\mid \super{t}{t}\mid \elimsup(t,\abstr{x}t,\abstr{x}t)
  \mid \elimsup^1(t,\abstr{x}t)
  \mid \elimsup^2(t,\abstr{x}t)
\end{align*}
The variables $x$ express the proofs built with the rule axiom,
the terms $t \plus u$ those built with the rule sum,
the terms $\bullet t$ those built with the rule prod, 
the term $\star$ that built with the rule $\top$-i,
the terms $\elimtop(t,u)$ those built with the rule $\top$-e,
the terms $\elimbot(t)$ those built with the rule $\bot$-e,
the terms $\lambda \abstr{x}t$ those built with the rule $\Rightarrow$-i,
the terms $t~u$ those built with the rule $\Rightarrow$-e,
the terms $\pair{t}{u}$ those built with the rule $\wedge$-i,
the terms $\elimand^1(t, \abstr{x}u)$ and $\elimand^2(t, \abstr{x}u)$
those built with the rules $\wedge$-e1 and $\wedge$-e2,
the terms $\inl(t)$ 
and $\inr(t)$ those built with the rules $\vee$-i1 and $\vee$-i2,
the terms $\elimor(t,\abstr{x}u,\abstr{y}v)$ those built with the rule $\vee$-e,
the terms $\super{t}{u}$ those built with the rule $\odot$-i,
and the terms $\elimsup(t,\abstr{x}u,\abstr{y}v)$, 
$\elimsup^1(t,\abstr{x}u)$, and $\elimsup^2(t,\abstr{x}u)$
those built with the rules $\odot$-e, $\odot$-e1, and $\odot$-e2.

The proofs of the form $\star$, $\lambda \abstr{x}t$, $\pair{t}{u}$, $\inl(t)$,
$\inr(t)$, and $\super{t}{u}$ are called {\it introductions}, and those of
the form
$\elimtop(t,u)$,
$\elimbot(t)$,
$t~u$,
$\elimand^1(t,\abstr{x}u)$,
$\elimand^2(t,\abstr{x}u)$,
$\elimor(t,\abstr{x}u,\abstr{y}v)$,
$\elimsup(t,\abstr{x}u,\abstr{y}v)$, 
$\elimsup^1(t,\abstr{x}u)$, and
$\elimsup^2(t,\abstr{x}u)$
{\it eliminations}.  The variables and the proofs of the form $t \plus
u$ and $\bullet t$ are neither introductions nor eliminations.

The $\alpha$-equivalence relation and the free and bound variables of
a proof-term are defined as usual. Proof-terms are defined modulo
$\alpha$-equivalence.  A proof-term is closed if it contains no free
variables.  We write $(u/x)t$ for the substitution of $u$ for $x$ in
$t$.

\begin{figure}[t]
  $$\irule{}
        {\Gamma \vdash x:A}
        {\mbox{axiom~$x:A \in \Gamma$}}
	\qquad
\irule{\Gamma \vdash t:A & \Gamma \vdash u:A}
        {\Gamma \vdash t \plus u:A}
        {\mbox{sum}}
        \qquad
\irule{\Gamma \vdash t:A}
        {\Gamma \vdash \bullet t:A}
        {\mbox{prod}}$$
$$\irule{}
        {\Gamma \vdash \star:\top}
        {\mbox{$\top$-i}}
        \qquad
        \irule{\Gamma \vdash t:\top & \Gamma \vdash u:C}
              {\Gamma \vdash \elimtop(t,u):C}
        {\mbox{$\top$-e}}
        \qquad
\irule{\Gamma \vdash t:\bot}
        {\Gamma \vdash \elimbot(t):C}
        {\mbox{$\bot$-e}}
	$$
$$\irule{\Gamma, x:A \vdash t:B}
        {\Gamma \vdash \lambda \abstr{x}t:A \Rightarrow B}
        {\mbox{$\Rightarrow$-i}}
        \qquad
\irule{\Gamma \vdash t:A\Rightarrow B & \Gamma \vdash u:A}
        {\Gamma \vdash t~u:B}
        {\mbox{$\Rightarrow$-e}}$$
$$\irule{\Gamma \vdash t:A & \Gamma \vdash u:B}
        {\Gamma \vdash \pair{t}{u}:A \wedge B}
        {\mbox{$\wedge$-i}}$$
$$\irule{\Gamma \vdash t:A \wedge B & \Gamma, x:A \vdash u:C}
        {\Gamma \vdash \elimand^1(t,\abstr{x}u):C}
        {\mbox{$\wedge$-e1}}
        \qquad
        \irule{\Gamma \vdash t:A \wedge B & \Gamma, x:B \vdash u:C}
        {\Gamma \vdash \elimand^2(t,\abstr{x}u):C}
        {\mbox{$\wedge$-e2}}$$
$$\irule{\Gamma \vdash t:A}
        {\Gamma \vdash \inl(t):A \vee B}
        {\mbox{$\vee$-i1}}
        \qquad
\irule{\Gamma \vdash t:B}
        {\Gamma \vdash \inr(t):A \vee B}
        {\mbox{$\vee$-i2}}$$
$$  
\irule{\Gamma \vdash t:A \vee B & \Gamma, x:A \vdash u:C & \Gamma, y:B \vdash v:C}
        {\Gamma \vdash \elimor(t,\abstr{x}u,\abstr{y}v):C}
        {\mbox{$\vee$-e}}$$
$$\irule{\Gamma \vdash t:A & \Gamma \vdash u:B}
        {\Gamma \vdash \super{t}{u}:A \odot B}
        {\mbox{$\odot$-i}}$$
$$\irule{\Gamma \vdash t:A\odot B & \Gamma, x:A \vdash u:C & \Gamma, y:B \vdash v:C}
        {\Gamma \vdash \elimsup(t,\abstr{x}u,\abstr{y}v):C}
        {\mbox{$\odot$-e}}$$
$$\irule{\Gamma \vdash t:A\odot B & \Gamma, x:A \vdash u:C}
        {\Gamma \vdash \elimsup^1(t,\abstr{x}u):C}
        {\mbox{$\odot$-e1}}
        \qquad
  \irule{\Gamma \vdash t:A\odot B & \Gamma, x:B \vdash u:C}
        {\Gamma \vdash \elimsup^2(t,\abstr{x}u):C}
        {\mbox{$\odot$-e2}}$$
\caption{The typing rules of the $\odot$-calculus\label{figtypingrules}}
\end{figure}
        
\begin{figure}[t]
  \[
    \begin{array}{r@{\,}l}
      \elimtop(\star, t) & \longrightarrow \bullet t\\
      (\lambda \abstr{x}t)~u & \longrightarrow  (u/x)t\\
      \elimand^1(\pair{t}{u}, \abstr{x}v) & \longrightarrow  (t/x)v\\
      \elimand^2(\pair{t}{u}, \abstr{x}v) & \longrightarrow  (u/x)v\\
      \elimor(\inl(t),\abstr{x}v,\abstr{y}w) & \longrightarrow  (t/x)v\\
      \elimor(\inr(u),\abstr{x}v,\abstr{y}w) & \longrightarrow  (u/y)w\\
      \elimsup(\super{t}{u},\abstr{x}v,\abstr{y}w) & \longrightarrow  (t/x)v\\
      \elimsup(\super{t}{u},\abstr{x}v,\abstr{y}w) & \longrightarrow  (u/y)w\\
      \elimsup^1(\super{t}{u},\abstr{x}v) & \longrightarrow  (t/x)v\\
      \elimsup^2(\super{t}{u},\abstr{x}v) & \longrightarrow  (u/x)v\\
      \\
      \star \plus \star&\longrightarrow  \star\\
      (\lambda \abstr{x}t) \plus (\lambda \abstr{x}u) & \longrightarrow  \lambda \abstr{x}(t \plus u)\\
      \pair{t}{u} \plus \pair{v}{w}
      & \longrightarrow  \pair{t \plus v}{u \plus w}\\
      \elimor(t \plus u,\abstr{x}v,\abstr{y}w) & \longrightarrow 
      \elimor(t,\abstr{x}v,\abstr{y}w)
      \plus
      \elimor(u,\abstr{x}v,\abstr{y}w)
      \\
      \super{t}{u} \plus \super{v}{w} & \longrightarrow
      \super{t \plus v}{u \plus w}\\
\\
      \bullet \star &\longrightarrow  \star\\
      \bullet (\lambda \abstr{x}t) & \longrightarrow  \lambda \abstr{x}(\bullet t)\\
      \bullet \pair{t}{u} 
      & \longrightarrow  \pair{\bullet t}{\bullet u}\\
      \elimor(\bullet t,\abstr{x}v,\abstr{y}w) & \longrightarrow 
      \bullet \elimor(t,\abstr{x}v,\abstr{y}w)
      \\
      \bullet \super{t}{u} & \longrightarrow  \super{\bullet t}{\bullet u}
    \end{array}
  \]
\caption{The reduction rules of the $\odot$-calculus\label{figureductionrules}}
\end{figure}

The typing rules of the $\odot$-calculus are given in
Figure~\ref{figtypingrules} and its reduction rules in
Figure~\ref{figureductionrules}, they are the expression on proof-terms of the
reduction rules presented above. An instance of the second rule of
Figure~\ref{figureductionrules} is $$(\lambda \abstr{x}x)~y \lra y$$ and an
instance of the third is $$\elimand^1(\pair{\star}{\star},\abstr{x}x) \lra
\star$$

In this paper, we consider the usual reduction rules for natural
deduction, but not the so-called {\it commuting cuts}, that yield the
subformula property and the equivalence with cut-free sequent
calculus. Although we believe proving the termination with commuting cuts
is not difficult (in particular because we have included ultra-reduction
rules), we leave this for future work.

\begin{remark}
This system is a higher-order rewrite system \cite{Klop,Nipkow}. A more 
rigorous notation would be to consider 
the symbol
$\abstr{}$ as the abstraction, to add a symbol $app$ for application, and 
to add a rewrite rule $\beta$, $app(\abstr{x}t,u) \lra (u/x)t$ used to build 
the instances of the rules of Figure~\ref{figureductionrules}.

Hence, the second rule of Figure~\ref{figureductionrules} would be expressed as
$$(\lambda \abstr{x}app(T,x))~U \lra app(T,U)$$
the third as 
$$\elimand^1(\pair{T}{U},\abstr{x}app(V,x)) \lra app(V,T)$$
etc.

Substituting the proof $\abstr{x}x$ for the variable $T$ and the proof $y$ for
the variable $U$ in the first rule and reducing, with the added rule, both
sides of the rule yields the instance $$(\lambda \abstr{x}x)~y \lra y$$
and substituting $\star$ for the variables $T$ and $U$ and the proof
$\abstr{x}x$ for the variable $V$, in the second, and
reducing both sides of the rule, with the added rule, yields the
instance
$$\elimand^1(\pair{\star}{\star},\abstr{x}x) \lra \star$$
etc.
\end{remark}

\subsection{Subject reduction}

To prove subject reduction, we first prove, as usual, a substitution
property.

\begin{proposition}[Substitution]
  If $\Gamma,x:B\vdash t:A$ and $\Gamma\vdash u:B$, then $\Gamma \vdash
  (u/x)t:A$.
\end{proposition}

\begin{proof}
By induction on the structure of $t$.
\end{proof}

And use it to prove the theorem itself.

\begin{theorem}[Subject reduction]
  If $\Gamma \vdash t:A$ and $t \lra u$, then $\Gamma \vdash u:A$.
\end{theorem}

\begin{proof}
By induction on the definition of the relation $\lra$.
\end{proof}

\subsection{Confluence}

The introduction of the connective $\odot$ leads to a non-deterministic
calculus with the rules
$$\elimsup(\super{t}{u},\abstr{x}v,\abstr{y}w) \longrightarrow  (t/x)v$$
$$\elimsup(\super{t}{u},\abstr{x}v,\abstr{y}w) \longrightarrow
(u/y)w$$ Hence, the system presented in Figure~\ref{figureductionrules} is trivially non-confluent.  But if we drop
these two rules, the rest of the system is confluent.

\begin{theorem}[Confluence]
The system of Figure~\ref{figureductionrules} without the rules 
$$\elimsup(\super{t}{u},\abstr{x}v,\abstr{y}w) \longrightarrow  (t/x)v$$
$$\elimsup(\super{t}{u},\abstr{x}v,\abstr{y}w) \longrightarrow  (u/y)w$$
is confluent.
\end{theorem}

\begin{proof}
This system is left linear and it has no
critical pairs. And, as proved in \cite[Theorem 6.8]{Nipkow}, 
higher-order left linear systems without
critical pairs are confluent.
\end{proof}

Note that the untyped calculus does have critical pairs, for example the
proof $\elimor(\star \plus \star,\abstr{x}x,\abstr{y}y)$ reduces in
two different ways, but these critical pairs are not well-typed.

\subsection{Termination}

We now prove the strong termination of proof reduction, that is that
all reduction sequences are finite.  The proof follows the same
pattern as that for
intuitionistic
propositional natural deduction, that we recall in
the Appendix.

The $\odot$-calculus introduces two new features: the connective
$\odot$, its associated proof constructors $\super{}{}$, $\elimsup$,
$\elimsup^1$, and $\elimsup^2$, and the constructors $\plus$ and $\bullet$.
The termination proof of intuitionistic
propositional natural deduction extends smoothly
when we add the connective $\odot$, but adding the constructors $\plus$
and $\bullet$
is a bit more challenging.  To handle these symbols, we prove the strong
termination of an extended reduction system, in the spirit of Girard's
ultra-reduction \cite{Girard}, whose strong termination obviously
implies that of the rules of Figure~\ref{figureductionrules}.

\begin{definition}[Ultra-reduction]
  Ultra-reduction is defined with the rules of Figure~\ref{figureductionrules},
  plus the rules
  \begin{align*}
    t \plus u & \longrightarrow  t\\
    t \plus u & \longrightarrow  u\\
    \bullet t & \longrightarrow  t
  \end{align*}
\end{definition}

In the proof below, Propositions~\ref{Var}, \ref{star},
\ref{abstraction}, \ref{pair}, \ref{inl}, \ref{inr}, \ref{elimbot},
\ref{application}, \ref{elimand1}, and~\ref{elimand2} have the same
proofs as Propositions~\ref{app-Var}, \ref{app-star},
\ref{app-abstraction}, \ref{app-pair}, \ref{app-inl}, \ref{app-inr},
\ref{app-elimbot}, \ref{app-application}, \ref{app-elimand1}, and
\ref{app-elimand2} in the strong termination of proof reduction for
intuitionistic propositional natural deduction (except that the references to
Propositions~\ref{app-Var}, \ref{app-closure}, and~\ref{app-CR3} must
be replaced with references to Propositions~\ref{Var}, \ref{closure},
and~\ref{CR3}). So we will omit these proofs. Propositions
\ref{closure}, \ref{CR3}, \ref{elimtop}, and~\ref{elimor} have proofs
similar to those of Propositions~\ref{app-closure}, \ref{app-CR3},
\ref{app-elimtop}, and~\ref{app-elimor}, but these proofs require
minor tweaks. In contrast, Propositions~\ref{terminationsum},
\ref{terminationprod},
\ref{sum},
\ref{prod},
\ref{super},
\ref{elimsup}, \ref{elimsup1}, and~\ref{elimsup2} are specific.

\begin{definition}[Length of reduction]
If $t$ is a strongly terminating proof, we write $|t|$ for the
maximum length of a reduction sequence issued from $t$.
\end{definition}

\begin{proposition}[Termination of a sum]
\label{terminationsum}
If $t$ and $u$ strongly terminate, then so does $t \plus u$. 
\end{proposition}

\begin{proof}
We prove that all the one-step reducts of 
$t \plus u$ strongly terminate, by induction first on 
$|t| + |u|$ and then on the size of $t$. 

If the reduction takes place in $t$ or in $u$ we apply the induction
hypothesis.
Otherwise the reduction is at the root and the rule used is either
\begin{align*}
\star \plus \star &\longrightarrow  \star\\
(\lambda \abstr{x}t') \plus (\lambda \abstr{x}u')
&\longrightarrow  \lambda \abstr{x}(t' \plus u')\\
\pair{t'_1}{t'_2} \plus \pair{u'_1}{u'_2}
   &\longrightarrow  \pair{t'_1 \plus u'_1}{t'_2 \plus u'_2}\\
\super{t'_1}{t'_2} \plus \super{u'_1}{u'_2}
&\longrightarrow  \super{t'_1 \plus u'_1}{t'_2 \plus u'_2}\\
t \plus u &\longrightarrow t\\
t \plus u &\longrightarrow u
\end{align*}
In the first case, the proof $\star$ is irreducible, hence it strongly terminates. In the second, 
by induction hypothesis, the proof $t' \plus u'$
strongly terminates, thus so does the proof
$\lambda \abstr{x}(t' \plus u')$.
In the third and the fourth, 
by induction hypothesis, the proofs
$t'_1 \plus u'_1$ and $t'_2 \plus u'_2$
strongly terminate, hence so do the proofs
$\pair{t'_1 \plus u'_1}{t'_2 \plus u'_2}$
and 
$\super{t'_1 \plus u'_1}{t'_2 \plus u'_2}$.
In the fifth and the sixth, the proofs $t$ and $u$ strongly terminate. 
\end{proof}

\begin{proposition}[Termination of a product]
\label{terminationprod}
If $t$ strongly terminates, then so does $\bullet t$. 
\end{proposition}

\begin{proof}
We prove that all the one-step reducts of 
$\bullet t$ strongly terminate, by induction first on 
$|t|$ and then on the size of $t$. 

If the reduction takes place in $t$, we apply the induction
hypothesis.
Otherwise the reduction is at the root and the rule used is either
\begin{align*}
\bullet \star &\longrightarrow  \star\\
\bullet (\lambda \abstr{x}t') 
&\longrightarrow  \lambda \abstr{x}\bullet t'\\
\bullet \pair{t'_1}{t'_2} 
   &\longrightarrow  \pair{\bullet t'_1}{\bullet t'_2}\\
\bullet \super{t'_1}{t'_2} 
&\longrightarrow  \super{\bullet t'_1}{\bullet t'_2}\\
\bullet t &\longrightarrow t
\end{align*}
In the first case, the proof $\star$ is irreducible, hence it strongly terminates. In the second, 
by induction hypothesis, the proof $\bullet t'$
strongly terminates, thus so does the proof
$\lambda \abstr{x}\bullet t'$.
In the third and the fourth, 
by induction hypothesis, the proofs
$\bullet t'_1$ and $\bullet t'_2$
strongly terminate, hence so do the proofs
$\pair{\bullet t'_1}{\bullet t'_2}$
and 
$\super{\bullet t'_1}{\bullet t'_2}$.
In the fifth, the proof $t$ strongly terminates. 
\end{proof}

\begin{definition}
We define, by induction on the proposition $A$, a set of proofs
$\llbracket A \rrbracket$:
\begin{itemize}
\item $t \in \llbracket \top \rrbracket$ if $t$ strongly terminates,

\item $t \in \llbracket \bot \rrbracket$ if $t$ strongly terminates,

\item $t \in \llbracket A \Rightarrow B \rrbracket$ if $t$ strongly
  terminates and whenever it reduces to a proof of the form $\lambda
  \abstr{x}u$, then for every $v \in \llbracket A \rrbracket$, $(v/x)u \in
  \llbracket B \rrbracket$,

\item $t \in \llbracket A \wedge B \rrbracket$ if $t$ strongly
  terminates and whenever it reduces to a proof of the form
  $\pair{u}{v}$, then $u \in \llbracket A \rrbracket$ and $v \in \llbracket
  B \rrbracket$,

\item $t \in \llbracket A \vee B \rrbracket$ if $t$ strongly
  terminates and whenever it reduces to a proof of the form $\inl(u)$,
  then $u \in \llbracket A \rrbracket$, and whenever it reduces to a
  proof of the form $\inr(v)$, then $v \in \llbracket B \rrbracket$,

\item $t \in \llbracket A \odot B \rrbracket$ if $t$ strongly
  terminates and whenever it reduces to a proof of the form $\super{u}{v}$,
  then $u \in \llbracket A \rrbracket$ and $v \in \llbracket B
  \rrbracket$.
\end{itemize}
\end{definition}

\begin{proposition}[Variables]
\label{Var}
For any $A$, the set $\llbracket A \rrbracket$ contains all the variables.
\end{proposition}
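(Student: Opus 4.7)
The plan is to proceed by straightforward case analysis on the structure of $A$, exploiting the fact that a variable is a normal form and not an introduction. No induction hypothesis is actually required, but the argument mirrors the inductive definition of $\llbracket A \rrbracket$.

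First I would observe the two facts about a variable $x$ that do the work: (i) $x$ contains no redex and is not the head of any reduction rule in Figure \ref{figreductionrules} (nor an ultra-reduction rule), so no reduction step applies to $x$ and hence $|x|=0$; in particular $x$ strongly terminates. (ii) $x$ is not of the form $*$, $\lambda y~u$, $\pair{u}{v}$, $\inl(u)$, $\inr(v)$, or $u+v$, and since $x$ has no reducts, the only term $x$ reduces to (in the reflexive-transitive sense) is $x$ itself, which is not an introduction.

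With these two facts, I would dispatch the cases of Definition \ref{reducibility}. For $A = \top$ and $A = \bot$, membership only requires strong termination, which holds by (i). For $A = B \Rightarrow C$, $A = B \wedge C$, $A = B \vee C$, and $A = B \odot C$, strong termination holds by (i), and the second clause in each case is vacuously satisfied by (ii), since $x$ never reduces to any of the introduction forms $\lambda y~u$, $\pair{u}{v}$, $\inl(u)$, $\inr(v)$, or $u+v$.

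There is no real obstacle here: this is the standard base case of Girard's reducibility candidates method, and the extra connective $\odot$ together with the $\parallel$ constructor change nothing, since neither a variable is of the form $t \parallel u$ nor of the form $u+v$, and variables have no reducts under either the ordinary or the ultra-reduction relation.
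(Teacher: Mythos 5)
Your argument is correct and coincides with the paper's own proof (given for Proposition \ref{app-Var} and reused for Proposition \ref{Var}): a variable is irreducible, hence strongly terminating, and it never reduces to an introduction, so the remaining clauses of Definition \ref{reducibility} hold vacuously. Your version merely spells out the case analysis on $A$ that the paper leaves implicit.
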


\begin{proposition}[Closure by reduction]
\label{closure}
If $t \in \llbracket A \rrbracket$ and $t \longrightarrow^* t'$, then 
$t' \in \llbracket A \rrbracket$.
\end{proposition}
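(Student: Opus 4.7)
The plan is to argue by a case analysis on the top-level structure of the proposition $A$, mirroring the clauses of Definition~\ref{reducibility}. In each case the definition of $\llbracket A \rrbracket$ has two components: strong termination of the candidate, together with a universally quantified condition on the reducts of the candidate that have a prescribed introduction form. The goal is to show that both components pass from $t$ to any reduct $t'$.

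For the atomic cases $A = \top$ and $A = \bot$, membership in $\llbracket A \rrbracket$ is simply strong termination, which is immediately inherited: if $t$ strongly terminates and $t \longrightarrow^* t'$, then every reduction sequence from $t'$ can be prefixed by $t \longrightarrow^* t'$ to give a reduction sequence from $t$, so $t'$ strongly terminates as well (with $|t'| \leq |t|$).

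For the compound cases $A = B \Rightarrow C$, $A = B \wedge C$, $A = B \vee C$, and $A = B \odot C$, I would handle each uniformly. Strong termination of $t'$ follows as above. For the second clause, suppose $t' \longrightarrow^* s$ where $s$ has the relevant introduction shape --- $\lambda x~u$, $\pair{u}{v}$, $\inl(u)$ or $\inr(v)$, or $u + v$ respectively. Composing with $t \longrightarrow^* t'$ gives $t \longrightarrow^* s$, so the hypothesis $t \in \llbracket A \rrbracket$ directly yields the required reducibility statements about the immediate subterms of $s$, which is precisely what is needed to conclude $t' \in \llbracket A \rrbracket$. The only genuine novelty over the standard propositional case is the $\odot$ clause, but since its shape is identical to the $\wedge$ clause (both ask that both components of a reduct of the form $u + v$, respectively $\pair{u}{v}$, lie in the appropriate $\llbracket \cdot \rrbracket$), the same argument applies verbatim.

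I do not expect a real obstacle here: the statement holds essentially by design, because every clause of Definition~\ref{reducibility} is phrased as a universal quantification over reducts of the candidate, and such a quantification is automatically preserved under passing to a reduct $t'$, whose reducts form a subset of those of $t$. The only point to keep an eye on is that the argument must be phrased to work for ultra-reduction (not just ordinary reduction), since strong termination in Definition~\ref{reducibility} refers to the reduction relation that will later be shown to terminate, namely ultra-reduction; but since both the definitions and the transitivity argument are agnostic to the particular reduction relation used, this causes no difficulty.
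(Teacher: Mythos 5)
Your proposal is correct and follows essentially the same argument as the paper: strong termination passes to reducts, and for each compound connective the introduction-shaped reducts of $t'$ are also reducts of $t$ (by composing $t \longrightarrow^* t' \longrightarrow^* s$), so the universally quantified clause of Definition~\ref{reducibility} transfers directly, with the $\odot$ case handled exactly like $\wedge$. Your closing remark about the argument being agnostic to the reduction relation (hence valid for ultra-reduction) matches the paper's observation that only minor tweaks to the standard propositional proof are needed.
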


\begin{proof}
If $t \longrightarrow^* t'$ and $t$ strongly terminates, then $t'$
strongly terminates.

Furthermore, if $A$ has the form $B \Rightarrow C$ and $t'$ reduces to
$\lambda \abstr{x}u$, then so does $t$, hence for every $v \in \llbracket B
\rrbracket$, $(v/x)u \in \llbracket C \rrbracket$.

If $A$ has the form $B \wedge C$ and $t'$ reduces to $\pair{u}{v}$,
then so does $t$, hence $u \in \llbracket B \rrbracket$ and $v \in
\llbracket C \rrbracket$.

If $A$ has the form $B \vee C$ and $t'$ reduces to $\inl(u)$, then so
does $t$, hence $u \in \llbracket B \rrbracket$ and if $A$ has the
form $B \vee C$ and $t'$ reduces to $\inr(v)$, then so does $t$, hence
$v \in \llbracket C \rrbracket$.

And if $A$ has the form $B \odot C$ and $t'$ reduces to $\super{u}{v}$, then
so does $t$, hence $u \in \llbracket B \rrbracket$ and $v \in
\llbracket C \rrbracket$.
\end{proof}

\begin{proposition}[Girard's lemma]
\label{CR3}
Let $t$ be a proof that is not an introduction, 
such that all the one-step reducts of $t$
are in $\llbracket A \rrbracket$. Then, $t \in \llbracket A \rrbracket$.
\end{proposition}

\begin{proof}
Let $t, t_2, \ldots$ be a reduction sequence issued from $t$. If it has a
single element, it is finite. Otherwise, we have $t \longrightarrow
t_2$. As $t_2 \in \llbracket A \rrbracket$, it strongly terminates and
the reduction sequence is finite. Thus, $t$ strongly terminates.

Furthermore, if $A$ has the form $B \Rightarrow C$ and $t
\longrightarrow^* \lambda \abstr{x}u$, then let $t , t_2, \ldots, t_n$ be a
reduction sequence from $t$ to $\lambda \abstr{x}u$.  As $t_n$ is an
introduction and $t$ is not, $n \geq 2$. Thus, $t \longrightarrow t_2
\longrightarrow^* t_n$. We have $t_2 \in \llbracket A \rrbracket$,
thus for all $v \in \llbracket B \rrbracket$, $(v/x)u \in \llbracket C
\rrbracket$.

If $A$ has the form $B \wedge C$ and $t \longrightarrow^* \pair{u}{v}$, then let $t , t_2, \ldots, t_n$ be a reduction sequence
from $t$ to $\pair{u}{v}$.  As $t_n$ is an introduction and
$t$ is not, $n \geq 2$. Thus, $t \longrightarrow t_2 \longrightarrow^*
t_n$. We have $t_2 \in \llbracket A \rrbracket$, thus $u \in
\llbracket B \rrbracket$ and $v \in \llbracket C \rrbracket$.

If $A$ has the form $B \vee C$ and $t \longrightarrow^* \inl(u)$, then let $t ,
t_2, \ldots, t_n$ be a reduction sequence from $t$ to $\inl(u)$.  As
$t_n$ is an introduction and $t$ is not, $n \geq 2$.  Thus, $t
\longrightarrow t_2 \longrightarrow^* t_n$. We have $t_2 \in
\llbracket A \rrbracket$, thus $u \in \llbracket B \rrbracket$.

If $A$ has the form $B \vee C$ and $t \longrightarrow^* \inr(v)$, then let $t ,
t_2, \ldots, t_n$ be a reduction sequence from $t$ to $\inr(v)$.  As
$t_n$ is an introduction and $t$ is not, $n \geq 2$.  Thus, $t
\longrightarrow t_2 \longrightarrow^* t_n$. We have $t_2 \in
\llbracket A \rrbracket$, thus $v \in \llbracket C \rrbracket$.

And if $A$ has the form $B \odot C$ and $t \longrightarrow^* \super{u}{v}$, then
let $t , t_2, \ldots, t_n$ be a reduction sequence from $t$ to $\super{u}{v}$.
As $t_n$ is an introduction and $t$ is not, $n \geq 2$. Thus, $t
\longrightarrow t_2 \longrightarrow^* t_n$. We have $t_2 \in
\llbracket A \rrbracket$, thus $u \in \llbracket B \rrbracket$ and $v
\in \llbracket C \rrbracket$.
\end{proof}

In Propositions~\ref{sum} to~\ref{elimsup2}, we prove the adequacy of each proof constructor.

\begin{proposition}[Adequacy of $\plus$]
\label{sum}
If $t_1 \in \llbracket A \rrbracket$ and $t_2 \in \llbracket A
\rrbracket$, then $t_1 \plus t_2 \in \llbracket A \rrbracket$.
\end{proposition}

\begin{proof}
  By induction on $A$.

  The proofs $t_1$ and $t_2$ strongly terminate.  Thus, by
  Proposition~\ref{terminationsum}, the proof $t_1 \plus t_2$ strongly
  terminates.

  Furthermore:
  \begin{itemize}
    \item If the proposition $A$ has the form $B \Rightarrow C$, and $t_1
      \plus t_2 \lra^* \lambda \abstr{x} v$ then either $t_1 \lra^*
      \lambda \abstr{x} u_1$, $t_2 \lra^* \lambda \abstr{x} u_2$, and $u_1
      \plus u_2 \lra^* v$, or $t_1 \lra^* \lambda \abstr{x} v$, or $t_2
      \lra^* \lambda \abstr{x} v$.
  
  In the first case, by Proposition~\ref{closure}, the proofs $\lambda
  \abstr{x}u_1$ and $\lambda \abstr{x}u_2$ are in $\llbracket A
  \rrbracket$. Thus, for every $w$ in $\llbracket B \rrbracket$,
  $(w/x)u_1 \in \llbracket C \rrbracket$ and $(w/x)u_2 \in \llbracket
  C \rrbracket$.  By induction hypothesis, $(w/x)(u_1 \plus u_2) =
  (w/x)u_1 \plus (w/x)u_2 \in \llbracket C \rrbracket$ and by
  Proposition~\ref{closure}, $(w/x)v \in \llbracket C \rrbracket$.

  In the second and the third, by Proposition~\ref{closure}, $\lambda
  \abstr{x} v \in \llbracket A \rrbracket$ hence, for every $w$ in
  $\llbracket B \rrbracket$, $(w/x)v \in \llbracket C \rrbracket$.
  
\item If the proposition $A$ has the form $B \wedge C$, and $t_1
  \plus t_2 \lra^* \pair{v}{v'}$ then $t_1 \lra^* \pair{u_1}{u'_1}$,
  $t_2 \lra^* \pair{u_2}{u'_2}$, $u_1 \plus u_2 \lra^* v$, and $u'_1
  \plus u'_2 \lra^* v'$, or $t_1 \lra^* \pair{v}{v'}$, or $t_2 \lra^*
  \pair{v}{v'}$.

  In the first case, by Proposition~\ref{closure}, the proofs
  $\pair{u_1}{u'_1}$ and $\pair{u_2}{u'_2}$ are in $\llbracket A
  \rrbracket$.  Thus, $u_1$ and $u_2$ are in $\llbracket B \rrbracket$
  and $u'_1$ and $u'_2$ are in $\llbracket C \rrbracket$.  By
  induction hypothesis, $u_1 \plus u_2 \in \llbracket B \rrbracket$
  and $u'_1 \plus u'_2 \in \llbracket C \rrbracket$ and by
  Proposition~\ref{closure}, $v \in \llbracket B \rrbracket$ and $v' \in
  \llbracket C \rrbracket$.

  In the second and the third, by Proposition~\ref{closure},
  $\pair{v}{v'} \in \llbracket A \rrbracket$ hence, $v \in \llbracket
  B \rrbracket$ and $v' \in \llbracket C \rrbracket$.

\item If the proposition $A$ has the form $B \vee C$, and $t_1 \plus
  t_2 \lra^* \inl(v)$ then $t_1 \lra^* \inl(v)$ or $t_2 \lra^*
  \inl(v)$.  Thus, by Proposition~\ref{closure}, $\inl(v) \in
  \llbracket A \rrbracket$, hence $v \in \llbracket B \rrbracket$.

  The proof is similar if $t_1 \plus t_2 \lra^* \inr(v)$.
  
\item If the proposition $A$ has the form $B \odot C$, and $t_1
  \plus t_2 \lra^* \super{v}{v'}$ then $t_1 \lra^* \super{u_1}{u'_1}$,
  $t_2 \lra^* \super{u_2}{u'_2}$, $u_1 \plus u_2 \lra^* v$, and $u'_1
  \plus u'_2 \lra^* v'$, or $t_1 \lra^* \super{v}{v'}$, or $t_2 \lra^*
  \super{v}{v'}$.

  In the first case, by Proposition~\ref{closure}, the proofs
  $\super{u_1}{u'_1}$ and $\super{u_2}{u'_2}$ are in $\llbracket A
  \rrbracket$.  Thus, $u_1$ and $u_2$ are in $\llbracket B \rrbracket$
  and $u'_1$ and $u'_2$ are in $\llbracket C \rrbracket$.  By
  induction hypothesis, $u_1 \plus u_2 \in \llbracket B \rrbracket$
  and $u'_1 \plus u'_2 \in \llbracket C \rrbracket$ and by
  Proposition~\ref{closure}, $v \in \llbracket B \rrbracket$ and $v' \in
  \llbracket C \rrbracket$.

  In the second and the third, by Proposition~\ref{closure},
  $\super{v}{v'} \in \llbracket A \rrbracket$ hence, $v \in \llbracket
  B \rrbracket$ and $v' \in \llbracket C \rrbracket$.  \qedhere
\end{itemize}
\end{proof}

\begin{proposition}[Adequacy of $\bullet$]
\label{prod}
If $t \in \llbracket A \rrbracket$, then $\bullet t \in \llbracket
A \rrbracket$.
\end{proposition}

\begin{proof}
  By induction on $A$.

  The proofs $t$ strongly terminates.  Thus, by
  Proposition~\ref{terminationprod}, the proof $\bullet t$ strongly terminates.

  Furthermore:

\begin{itemize}
\item If the proposition $A$ has the form $B \Rightarrow C$, and $\bullet t
  \lra^* \lambda \abstr{x} v$ then either $t \lra^*
  \lambda \abstr{x} u$, and $\bullet u
   \lra^* v$, or $t \lra^* \lambda \abstr{x} v$.
  
  In the first case, by Proposition~\ref{closure}, the proof $\lambda
  \abstr{x}u$ is in $\llbracket A \rrbracket$. Thus, for every $w$ in
  $\llbracket B \rrbracket$, $(w/x)u \in \llbracket C \rrbracket$.
  By induction hypothesis, $(w/x)\bullet u = \bullet (w/x)u \in
  \llbracket C \rrbracket$ and by Proposition~\ref{closure}, $(w/x)v
  \in \llbracket C \rrbracket$.

  In the second, by Proposition~\ref{closure}, $\lambda
  \abstr{x} v \in \llbracket A \rrbracket$ hence, for every $w$ in
  $\llbracket B \rrbracket$, $(w/x)v \in \llbracket C \rrbracket$.
  
\item If the proposition $A$ has the form $B \wedge C$, and $\bullet t
  \lra^* \pair{v}{v'}$ then $t \lra^* \pair{u}{u'}$,
  $\bullet u \lra^* v$, and $\bullet u'
  \lra^* v'$, or $t \lra^* \pair{v}{v'}$.

  In the first case, by Proposition~\ref{closure}, the proof
  $\pair{u}{u'}$ is in $\llbracket A
  \rrbracket$.  Thus, $u$ is in $\llbracket B \rrbracket$
  and $u'$ is in $\llbracket C \rrbracket$.  By
  induction hypothesis, $\bullet u \in \llbracket B \rrbracket$
  and $\bullet u' \in \llbracket C \rrbracket$ and by
  Proposition~\ref{closure}, $v \in \llbracket B \rrbracket$ and $v' \in
  \llbracket C \rrbracket$.

  In the second, by Proposition~\ref{closure},
  $\pair{v}{v'} \in \llbracket A \rrbracket$ hence, $v \in \llbracket
  B \rrbracket$ and $v' \in \llbracket C \rrbracket$.

\item
If the proposition $A$ has the form $B \vee C$, and $\bullet t
 \lra^* \inl(v)$ then $t \lra^* \inl(v)$.
  Thus, by Proposition~\ref{closure},
  $\inl(v) \in \llbracket A \rrbracket$, hence $v \in \llbracket
  B \rrbracket$.

  The proof is similar if $\bullet t \lra^* \inr(v)$.
  
\item If the proposition $A$ has the form $B \odot C$, and $\bullet t
   \lra^* \super{v}{v'}$ then $t \lra^* \super{u}{u'}$,
  $\bullet u  \lra^* v$, and $\bullet u'
  \lra^* v'$, or $t \lra^* \super{v}{v'}$.

  In the first case, by Proposition~\ref{closure}, the proofs
  $\super{u}{u'}$ is in $\llbracket A
  \rrbracket$.  Thus, $u$ is in $\llbracket B \rrbracket$
  and $u'$ is in $\llbracket C \rrbracket$.  By
  induction hypothesis, $\bullet u  \in \llbracket B \rrbracket$
  and $\bullet u' \in \llbracket C \rrbracket$ and by
  Proposition~\ref{closure}, $v \in \llbracket B \rrbracket$ and $v' \in
  \llbracket C \rrbracket$.

  In the second, by Proposition~\ref{closure},
  $\super{v}{v'} \in \llbracket A \rrbracket$ hence, $v \in \llbracket
  B \rrbracket$ and $v' \in \llbracket C \rrbracket$.  \qedhere
\end{itemize}
\end{proof}

\begin{proposition}[Adequacy of $\star$]
\label{star}
We have $\star \in \llbracket \top \rrbracket$.
\end{proposition}

\begin{proposition}[Adequacy of $\lambda$]
\label{abstraction}
If, for all $u \in \llbracket A \rrbracket$, $(u/x)t \in \llbracket B
\rrbracket$, then $\lambda \abstr{x}t \in \llbracket A \Rightarrow B
\rrbracket$.
\end{proposition}

\begin{proposition}[Adequacy of $\pair{}{}$]
\label{pair}
If $t_1 \in \llbracket A \rrbracket$ and $t_2 \in \llbracket B
\rrbracket$, then $\pair{t_1}{t_2} \in \llbracket A \wedge B
\rrbracket$.
\end{proposition}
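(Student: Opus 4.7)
The plan is to unfold Definition \ref{reducibility} at $A\wedge B$ and verify its two requirements for the term $\pair{t_1}{t_2}$: that it strongly terminates, and that every reduct of the form $\pair{u}{v}$ satisfies $u\in\llbracket A\rrbracket$ and $v\in\llbracket B\rrbracket$. The crucial observation driving both is that $\pair{t_1}{t_2}$ is itself never the left-hand side of any head redex. Inspecting Figure \ref{figreductionrules}, the only rule whose left-hand side begins with a pair is the commutation $\pair{t}{u}\parallel\pair{v}{w} \longrightarrow \pair{t\parallel v}{u\parallel w}$, which requires a surrounding $\parallel$ context. Consequently, every one-step reduct of $\pair{t_1}{t_2}$ is either $\pair{t_1'}{t_2}$ with $t_1\longrightarrow t_1'$, or $\pair{t_1}{t_2'}$ with $t_2\longrightarrow t_2'$.

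For strong termination, I would note that, by the previous remark, the length of any reduction sequence issued from $\pair{t_1}{t_2}$ is bounded by $|t_1|+|t_2|$, which is finite because $t_1\in\llbracket A\rrbracket$ and $t_2\in\llbracket B\rrbracket$ already strongly terminate. For the second clause, iterating the same remark shows that any reduction $\pair{t_1}{t_2}\longrightarrow^* \pair{u}{v}$ decomposes into $t_1\longrightarrow^* u$ and $t_2\longrightarrow^* v$; closure of reducibility under reduction (Proposition \ref{closure}) then yields $u\in\llbracket A\rrbracket$ and $v\in\llbracket B\rrbracket$, as required.

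There is no real obstacle here: the argument rests entirely on the syntactic fact that a pair is a purely passive constructor, which is immediate from a case analysis on Figure \ref{figreductionrules}. This is the same pattern that works for pairs in the standard reducibility proof for propositional natural deduction (referenced as Proposition \ref{app-pair}), and the only thing worth mentioning is that the new rules introduced by the $\odot$-calculus (the $\parallel$-commutations, the $\elimsup$/$\elimsuppar$ reductions, and ultra-reduction) do not add any head reduction to $\pair{t_1}{t_2}$, so the classical proof goes through unchanged.
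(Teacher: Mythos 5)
Your proof is correct and follows essentially the same route as the paper's: the paper omits the proof of Proposition \ref{pair}, stating it is identical to Proposition \ref{app-pair} in the appendix, which argues exactly as you do that a reduction sequence from $\pair{t_1}{t_2}$ can only reduce $t_1$ and $t_2$, and then invokes closure under reduction (Proposition \ref{closure}) for the second clause. Your explicit check that none of the new rules ($\parallel$-commutations, $\elimsup$/$\elimsuppar$, ultra-reduction) creates a head redex on a pair is precisely the justification the paper leaves implicit when claiming the classical proof carries over unchanged.
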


\begin{proposition}[Adequacy of $\inl$]
\label{inl}
If $t \in \llbracket A \rrbracket$, then $\inl(t) \in \llbracket A
\vee B \rrbracket$.
\end{proposition}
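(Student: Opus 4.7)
The plan is to verify directly the three conditions that define membership in $\llbracket A \vee B \rrbracket$ for the proof-term $\inl(t)$: strong termination, the constraint on reducts of the form $\inl(u)$, and the constraint on reducts of the form $\inr(v)$. The hypothesis $t \in \llbracket A \rrbracket$ will be used through two of its consequences: $t$ strongly terminates, and $\llbracket A \rrbracket$ is closed under reduction (Proposition~\ref{closure}).

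The first step is a shape-preservation observation. Inspecting the (ultra-)reduction rules in Figure~\ref{figreductionrules}, none of them has a head of the form $\inl(\cdot)$ as a left-hand side, and the commutation rules involving $\parallel$ do not apply either, since $\inl(t)$ is not a parallel term. Hence every one-step reduct of $\inl(t)$ has the form $\inl(t')$ with $t \longrightarrow t'$, and in particular it is never of the form $\inr(v)$. This immediately makes the third clause of Definition~\ref{reducibility} vacuous.

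From the same projection onto reductions of $t$, strong termination of $\inl(t)$ follows: any infinite reduction sequence from $\inl(t)$ would project to an infinite reduction sequence from $t$, contradicting strong termination of $t$. For the second clause, if $\inl(t) \longrightarrow^* \inl(u)$, then by the shape-preservation step we have $t \longrightarrow^* u$, and Proposition~\ref{closure} gives $u \in \llbracket A \rrbracket$.

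There is no real obstacle here; the statement is essentially a direct unfolding of the definition. The only point that requires a little care is the rigorous justification of shape preservation, which is why I would make the inspection of the reduction rules explicit rather than treat it as obvious—this is what rules out, in one stroke, both a spurious $\inr(v)$ reduct and any reduction of $\inl(t)$ that does not come from an internal reduction of $t$.
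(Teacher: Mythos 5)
Your proof is correct and follows essentially the same route as the paper's (the proof of Proposition~\ref{inl} is, as the paper notes, that of Proposition~\ref{app-inl}): reductions of $\inl(t)$ can only take place inside $t$, so strong termination is inherited, reducts $\inl(u)$ come from reducts $u$ of $t$ handled by closure under reduction (Proposition~\ref{closure}), and no reduct of the form $\inr(v)$ ever arises. Your only addition is making the shape-preservation inspection of the reduction rules explicit, which the paper leaves implicit.
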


\begin{proposition}[Adequacy of $\inr$]
\label{inr}
If $t \in \llbracket B \rrbracket$, then $\inr(t) \in \llbracket A
\vee B \rrbracket$.
\end{proposition}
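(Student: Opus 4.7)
The plan is to unfold the definition of $\llbracket A \vee B \rrbracket$ and verify its two conditions for $\inr(t)$, using the hypothesis $t \in \llbracket B \rrbracket$ together with the closure lemma (Proposition \ref{closure}). This will mirror the proof for $\inl$ (Proposition \ref{inl}), just swapping the roles of the two branches.

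First I would observe that $\inr$ is a pure introduction constructor: no reduction rule in Figure \ref{figreductionrules} has a left-hand side of the form $\inr(\cdot)$ or $\inr(\cdot) \parallel \inr(\cdot)$, and the ultra-reduction rules only break a top-level $\parallel$. Hence every one-step reduct of $\inr(t)$ is of the form $\inr(t')$ with $t \longrightarrow t'$, and by an easy induction the same holds for reducts of arbitrary length. In particular, $\inr(t)$ can never reduce to a term of the form $\inl(u)$, so the $\inl$-clause of the definition of $\llbracket A \vee B \rrbracket$ is satisfied vacuously.

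Next I would show strong termination of $\inr(t)$. Since $t \in \llbracket B \rrbracket$, $t$ strongly terminates. Any infinite reduction sequence issued from $\inr(t)$ would, by the observation above, project to an infinite reduction sequence issued from $t$, contradicting its strong termination. So $\inr(t)$ strongly terminates.

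Finally, for the $\inr$-clause: if $\inr(t) \longrightarrow^* \inr(v)$, then by the same projection argument, $t \longrightarrow^* v$, and Proposition \ref{closure} applied to $t \in \llbracket B \rrbracket$ yields $v \in \llbracket B \rrbracket$. All conditions of Definition \ref{reducibility} are met, so $\inr(t) \in \llbracket A \vee B \rrbracket$. There is no real obstacle here; the only mildly delicate point is noting that no reduction rule can disturb a head $\inr$, which is immediate from inspecting Figure \ref{figreductionrules} and the two ultra-reduction rules.
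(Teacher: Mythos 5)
Your proof is correct and follows essentially the same route as the paper's (which simply reuses the appendix proof for propositional natural deduction): strong termination because a reduction of $\inr(t)$ can only reduce $t$, the $\inl$-clause holds vacuously since $\inr(t)$ never reduces to an $\inl$, and the $\inr$-clause follows from Proposition \ref{closure}. Your explicit remark that no rule of Figure \ref{figreductionrules} nor the ultra-reduction rules has a left-hand side headed by $\inr$ just makes precise what the paper leaves implicit.
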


\begin{proposition}[Adequacy of $\super{}{}$]
\label{super}
If $t_1 \in \llbracket A \rrbracket$ and $t_2 \in \llbracket B
\rrbracket$, then $\super{t_1}{t_2} \in \llbracket A \odot B \rrbracket$.
\end{proposition}

\begin{proof}
The proofs $t_1$ and $t_2$ strongly terminate. Consider a reduction
sequence issued from $\super{t_1}{t_2}$.  This sequence can only reduce $t_1$
and $t_2$, hence it is finite.  Thus, $\super{t_1}{t_2}$ strongly terminates.

Furthermore, if $\super{t_1}{t_2} \longrightarrow^*
\super{t'_1}{t'_2}$, then $t_1 \lra^* t'_1$ and $t_2 \lra^* t'_2$.  By
Proposition~\ref{closure}, $t'_1 \in \llbracket A \rrbracket$ and
$t'_2 \in \llbracket B \rrbracket$.
\end{proof}

\begin{proposition}[Adequacy of $\elimtop$]
\label{elimtop}
If $t_1 \in \llbracket \top \rrbracket$ and $t_2 \in \llbracket C \rrbracket$, 
then $\elimtop(t_1,t_2) \in \llbracket C \rrbracket$.
\end{proposition}

\begin{proof}
The proofs $t_1$ and $t_2$ strongly terminate.  We prove, by
induction on $|t_1| + |t_2|$, that $\elimtop(t_1,t_2)
\in \llbracket C \rrbracket$.  Using Proposition~\ref{CR3}, we only
need to prove that every of its one step reducts is in $\llbracket C
\rrbracket$.  If the reduction takes place in $t_1$ or $t_2$, then we
apply Proposition~\ref{closure} and the induction hypothesis.

Otherwise, the proof $t_1$ is $\star$ and the
reduct is $\bullet t_2$. We conclude with Proposition~\ref{prod}.
\end{proof}

\begin{proposition}[Adequacy of $\elimbot$]
\label{elimbot}
If $t \in \llbracket \bot \rrbracket$, 
then $\elimbot(t) \in \llbracket C \rrbracket$.
\end{proposition}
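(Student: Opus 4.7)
The plan is to apply Girard's lemma (Proposition \ref{CR3}). Since $\elimbot(t)$ is not an introduction, it suffices to show that every one-step reduct of $\elimbot(t)$ lies in $\llbracket C \rrbracket$.

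The first step is a syntactic inspection of Figure \ref{figreductionrules}, together with the two ultra-reduction rules: no left-hand side has $\elimbot$ as its head symbol. There is no $\iota$-reduction associated with $\bot$-e (the elimination rule has no matching introduction rule to cut against), and no commutation rule is provided between $\elimbot$ and $\parallel$. Consequently every one-step reduct of $\elimbot(t)$ must arise from a contextual reduction inside the argument, and so has the form $\elimbot(t')$ for some $t'$ with $t \longrightarrow t'$.

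The second step is an induction on $|t|$, which is well-defined because $t \in \llbracket \bot \rrbracket$ guarantees that $t$ strongly terminates. In the base case, $t$ is in normal form, so $\elimbot(t)$ has no one-step reducts at all and Proposition \ref{CR3} applies vacuously. In the step case, any reduct $\elimbot(t')$ satisfies $|t'| < |t|$; by Proposition \ref{closure}, $t' \in \llbracket \bot \rrbracket$, and the induction hypothesis yields $\elimbot(t') \in \llbracket C \rrbracket$. Girard's lemma then gives $\elimbot(t) \in \llbracket C \rrbracket$.

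I do not expect any substantial obstacle here: the argument is really just ``closure of the candidate under contextual reduction in the unique argument'', and it works uniformly for all $C$ because $\llbracket C \rrbracket$ is only constrained on reducts of a specified introduction shape, none of which $\elimbot(t)$ can ever have. The only point that deserves a sentence in the write-up is the verification that no reduction rule fires at the head of $\elimbot(\cdot)$, so that the induction on $|t|$ is genuinely the whole story.
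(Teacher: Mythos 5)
Your argument is correct, but it takes a different route from the one the paper uses for this particular proposition. The paper proves \ref{elimbot} by reusing verbatim the proof of Proposition \ref{app-ebot} from the appendix, which is a direct two-line argument: since no reduction rule (including the ultra-reduction rules) has $\elimbot$ as its head symbol, every reduction sequence issued from $\elimbot(t)$ only reduces $t$, hence is finite because $t \in \llbracket \bot \rrbracket$ strongly terminates; and $\elimbot(t)$ never reduces to an introduction, so the remaining clauses in the definition of $\llbracket C \rrbracket$ hold vacuously for every $C$ --- no appeal to Girard's lemma, to closure under reduction, or to any induction is needed. You instead run the same template the paper reserves for the eliminations that do have head redexes ($\Rightarrow$-e, $\wedge$-e, $\vee$-e, $\odot$-e): Proposition \ref{CR3} plus an induction on $|t|$, using Proposition \ref{closure} to keep $t' \in \llbracket \bot \rrbracket$ along the way. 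This is sound --- your syntactic check that no rule fires at the head of $\elimbot(\cdot)$ is exactly right, and with it the induction goes through, the base case being vacuous --- but it is heavier than necessary: once you know all reducts stay of the form $\elimbot(t')$, strong termination of $\elimbot(t)$ follows immediately from that of $t$, and membership in $\llbracket C \rrbracket$ asks nothing more of a term that never becomes an introduction. The benefit of your version is uniformity with the other adequacy proofs; the benefit of the paper's is that it isolates why $\bot$-e is special, namely that its candidate conditions are trivially satisfied.
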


\begin{proposition}[Adequacy of application]
\label{application}
If $t_1 \in \llbracket A \Rightarrow B \rrbracket$ and $t_2 \in
\llbracket A \rrbracket$, then $t_1~t_2 \in \llbracket B
\rrbracket$.
\end{proposition}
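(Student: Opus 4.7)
The plan is to apply Girard's lemma (Proposition \ref{CR3}), since $t_1~t_2$ is an elimination and hence not an introduction; it therefore suffices to show that every one-step reduct of $t_1~t_2$ lies in $\llbracket B \rrbracket$. Since $t_1 \in \llbracket A \Rightarrow B \rrbracket$ and $t_2 \in \llbracket A \rrbracket$, both strongly terminate, so $|t_1|$ and $|t_2|$ are well-defined, and I would run an induction on the pair $(|t_1|,|t_2|)$ ordered lexicographically (or simply on $|t_1|+|t_2|$).

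For the case analysis on a one-step reduct $r$ of $t_1~t_2$: if the reduction is internal to $t_1$, then $r = t_1'~t_2$ with $t_1 \longrightarrow t_1'$; by Proposition \ref{closure}, $t_1' \in \llbracket A \Rightarrow B \rrbracket$, and since $|t_1'| < |t_1|$ the induction hypothesis gives $t_1'~t_2 \in \llbracket B \rrbracket$. Symmetrically if the reduction is internal to $t_2$. The only other possibility is that $t_1$ is itself of the form $\lambda x~u$ and the reduction is the head $\beta$-step $r = (t_2/x)u$; in this case the fact that $t_1 = \lambda x~u$ reduces to $\lambda x~u$ (in zero steps) together with $t_1 \in \llbracket A \Rightarrow B \rrbracket$ and the definition of $\llbracket A \Rightarrow B \rrbracket$ gives directly that $(t_2/x)u \in \llbracket B \rrbracket$.

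Since every one-step reduct of $t_1~t_2$ lies in $\llbracket B \rrbracket$ and $t_1~t_2$ is not an introduction, Proposition \ref{CR3} concludes $t_1~t_2 \in \llbracket B \rrbracket$. The only subtlety to watch for is that no new $\beta$-redex can be created from $t_1~t_2$ without $t_1$ itself already being a $\lambda$-abstraction, so the case distinction above is exhaustive; this is routine for the elimination constructors of $\Rightarrow$, and I expect no real obstacle beyond keeping the bookkeeping of reducts tidy. This mirrors verbatim the argument given for propositional natural deduction in the appendix (Proposition \ref{app-application}) and requires no ingredient specific to the new connectives.
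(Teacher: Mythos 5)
Your proof is correct and is essentially the paper's own argument: the paper proves this proposition exactly as its appendix Proposition~\ref{app-application}, i.e.\ by induction on $|t_1|+|t_2|$, applying Girard's lemma (Proposition~\ref{CR3}) to the non-introduction $t_1~t_2$, handling internal reductions via closure under reduction (Proposition~\ref{closure}) plus the induction hypothesis, and the head $\beta$-step via the definition of $\llbracket A \Rightarrow B \rrbracket$. Your observation that no other head reduction applies to an application (in particular none involving $\parallel$) is the reason the appendix proof carries over unchanged, as the paper claims.
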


\begin{proposition}[Adequacy of $\elimand^1$]
\label{elimand1}
If $t_1 \in \llbracket A \wedge B \rrbracket$
and, for all $u$ in $\llbracket A \rrbracket$,
$(u/x)t_2 \in \llbracket C \rrbracket $, 
then $\elimand^1(t_1, \abstr{x}t_2) \in \llbracket C \rrbracket$.
\end{proposition}

\begin{proposition}[Adequacy of $\elimand^2$]
\label{elimand2}
If $t_1 \in \llbracket A \wedge B \rrbracket$ and,
for all $u$ in $\llbracket B \rrbracket$,
$(u/x)t_2 \in \llbracket C \rrbracket $, 
then $\elimand^2(t_1, \abstr{x}t_2) \in \llbracket C \rrbracket$.
\end{proposition}

\begin{proposition}[Adequacy of $\elimor$]
\label{elimor}
If $t_1 \in \llbracket A \vee B \rrbracket$, for all $u$ in $\llbracket A
\rrbracket$, $(u/x)t_2 \in \llbracket C \rrbracket $, and, for all $v$
in $\llbracket B \rrbracket$, $(v/y)t_3 \in \llbracket C \rrbracket $,
then $\elimor(t_1, \abstr{x}t_2, \abstr{y}t_3) \in \llbracket C \rrbracket$.
\end{proposition}

\begin{proof}
By Proposition~\ref{Var}, $x \in \llbracket A \rrbracket$, thus $t_2 =
(x/x)t_2 \in \llbracket C \rrbracket$. In the same way, $t_3 \in
\llbracket C \rrbracket$.  Hence, $t_1$, $t_2$, and $t_3$ strongly
terminate.  We prove, by induction on $|t_1| + |t_2| + |t_3|$,
that $\elimor(t_1, \abstr{x}t_2,
\abstr{y}t_3) \in \llbracket C \rrbracket$.  Using Proposition~\ref{CR3}, we
only need to prove that every of its one step reducts
is in $\llbracket C \rrbracket$.  If the reduction takes place in
$t_1$, $t_2$, or $t_3$, then we apply Proposition~\ref{closure} and
the induction hypothesis. Otherwise, either:
\begin{itemize}
\item The proof $t_1$ has the form $\inl(w_2)$ and the reduct is
  $(w_2/x)t_2$. As $\inl(w_2) \in \llbracket A \vee B \rrbracket$, we
  have $w_2 \in \llbracket A \rrbracket$.  Hence, $(w_2/x)t_2 \in
  \llbracket C \rrbracket$.

\item The proof $t_1$ has the form $\inr(w_3)$ and the reduct is
  $(w_3/x)t_3$. As $\inr(w_3) \in \llbracket A \vee B \rrbracket$, we
  have $w_3 \in \llbracket B \rrbracket$.  Hence, $(w_3/x)t_3 \in
  \llbracket C \rrbracket$.

\item The proof $t_1$ has the form $t_1' \plus t''_1$ and the
  reduct is $\elimor(t'_1, \abstr{x}t_2, \abstr{y}t_3) \plus
  \elimor(t''_1, \abstr{x}t_2, \abstr{y}t_3)$. As $t_1
  \longrightarrow t'_1$ with an ultra-reduction rule, we have by
  Proposition~\ref{closure}, $t'_1 \in \llbracket A \vee B
  \rrbracket$.  In a similar way, $t''_1 \in \llbracket A \vee B
  \rrbracket$.  Thus, by induction hypothesis, $\elimor(t'_1,
  \abstr{x}t_2, \abstr{y}t_3) \in \llbracket A \vee B \rrbracket$
  and $\elimor(t''_1, \abstr{x}t_2, \abstr{y}t_3) \in \llbracket A
  \vee B \rrbracket$.  We conclude with Proposition~\ref{sum}.

\item The proof $t_1$ has the form $\bullet t_1'$ and the
  reduct is $\bullet \elimor(t'_1, \abstr{x}t_2, \abstr{y}t_3)$. As $t_1
  \longrightarrow t'_1$ with an ultra-reduction rule, we have by
  Proposition~\ref{closure}, $t'_1 \in \llbracket A \vee B
  \rrbracket$.  
  Thus, by induction hypothesis, $\elimor(t'_1,
  \abstr{x}t_2, \abstr{y}t_3) \in \llbracket A \vee B \rrbracket$.
  We conclude with Proposition~\ref{prod}.
  \qedhere
\end{itemize}
\end{proof}

\begin{proposition}[Adequacy of $\elimsup$]
\label{elimsup}
If $t_1 \in \llbracket A \odot B \rrbracket$, for all $u$ in $\llbracket A
\rrbracket$, $(u/x)t_2 \in \llbracket C \rrbracket $, and, for all $v$
in $\llbracket B \rrbracket$, $(v/y)t_3 \in \llbracket C \rrbracket $,
then $\elimsup(t_1, \abstr{x}t_2, \abstr{y}t_3) \in \llbracket C \rrbracket$.
\end{proposition}

\begin{proof}
By Proposition~\ref{Var}, $x \in \llbracket A \rrbracket$, thus $t_2 =
(x/x)t_2 \in \llbracket C \rrbracket$. In the same way, $t_3 \in
\llbracket C \rrbracket$.  Hence, $t_1$, $t_2$, and $t_3$ strongly
terminate.  We prove, by induction on $|t_1| + |t_2| + |t_3|$, that
$\elimsup(t_1, \abstr{x}t_2, \abstr{y}t_3) \in \llbracket C
\rrbracket$.  Using Proposition~\ref{CR3}, we only need to prove that
every of its one step reducts is in $\llbracket C \rrbracket$.  If
the reduction takes place in $t_1$, $t_2$, or $t_3$, then we apply
Proposition~\ref{closure} and the induction hypothesis.

Otherwise, the proof $t_1$ has the form $\super{u}{v}$ and the reduct is
either $(u/x)t_2$ or $(v/x)t_3$.  As $\super{u}{v} \in \llbracket A
\odot B \rrbracket$, we have $u \in \llbracket A \rrbracket$ and $v
\in \llbracket B \rrbracket$.  Hence, $(u/x)t_2 \in \llbracket C
\rrbracket$ and $(v/x)t_3 \in \llbracket C \rrbracket$.
\end{proof}

\begin{proposition}[Adequacy of $\elimsup^1$]
\label{elimsup1}
If $t_1 \in \llbracket A \odot B \rrbracket$ and, for all $u$ in
$\llbracket A \rrbracket$, $(u/x)t_2 \in \llbracket C \rrbracket $,
then $\elimsup^1(t_1, \abstr{x}t_2) \in \llbracket C \rrbracket$.
\end{proposition}

\begin{proof}
By Proposition~\ref{Var}, $x \in \llbracket A \rrbracket$, thus $t_2 =
(x/x)t_2 \in \llbracket C \rrbracket$. 
Hence, $t_1$ and $t_2$ strongly
terminate.  We prove, by induction on $|t_1| + |t_2|$, that
$\elimsup^1(t_1, \abstr{x}t_2) \in \llbracket C
\rrbracket$.  Using Proposition~\ref{CR3}, we only need to prove that
every of its one step reducts is in $\llbracket C \rrbracket$.  If
the reduction takes place in $t_1$ or $t_2$, then we apply
Proposition~\ref{closure} and the induction hypothesis.

Otherwise, the proof $t_1$ has the form $\super{u}{v}$ and the reduct is
$(u/x)t_2$.  As $\super{u}{v} \in \llbracket A
\odot B \rrbracket$, we have $u \in \llbracket A \rrbracket$.
Hence, $(u/x)t_2 \in \llbracket
C \rrbracket$.
\end{proof}

\begin{proposition}[Adequacy of $\elimsup^2$]
\label{elimsup2}
If $t_1 \in \llbracket A \odot B \rrbracket$ and, for all $u$ in
$\llbracket B \rrbracket$, $(u/x)t_2 \in \llbracket C \rrbracket $,
then $\elimsup^2(t_1, \abstr{x}t_2) \in \llbracket C \rrbracket$.
\end{proposition}

\begin{proof}
  Similar to that of Proposition~\ref{elimsup1}.

\end{proof}

\begin{theorem}[Adequacy]
Let $t$ be a proof of $A$ in a context $\Gamma = x_1:A_1, \ldots, x_n:A_n$ and
$\sigma$ be a substitution mapping each variable $x_i$ to an element
of $\llbracket A_i \rrbracket$, then $\sigma t \in \llbracket A
\rrbracket$.
\end{theorem}

\begin{proof}
By induction on the structure of $t$.

If $t$ is a variable, then, by definition of $\sigma$, $\sigma t \in
\llbracket A \rrbracket$.  For the seventeen other proof constructors,
we use the Propositions~\ref{star} to~\ref{elimsup2}.  As all cases
are similar, we just give a few examples.

\begin{itemize}
\item If $t = \super{u}{v}$, where $u$ is a proof of $B$ and $v$ a proof of
  $C$, then, by induction hypothesis, $\sigma u \in \llbracket B
  \rrbracket$ and $\sigma v \in \llbracket C \rrbracket$.  Hence, by
  Proposition~\ref{super}, $\super{\sigma u}{\sigma v} \in \llbracket B \odot C
  \rrbracket$, that is $\sigma t \in \llbracket A \rrbracket$.

\item If $t = \elimsup(u_1,\abstr{x}u_2,\abstr{y}u_3)$, where
  $u_1$ is a proof of $B \odot C$, $u_2$ a proof of $A$, and $u_3$ 
  a proof of $A$, then, by induction hypothesis, $\sigma u_1 \in
  \llbracket B \odot C \rrbracket$, for all $v$ in $\llbracket B
  \rrbracket$, $(v/x)\sigma u_2 \in \llbracket A \rrbracket$, and for
  all $w$ in $\llbracket C \rrbracket$, $(w/x)\sigma u_3 \in
  \llbracket A \rrbracket$. Hence, by Proposition~\ref{elimsup},
  $\elimsup(\sigma u_1,\abstr{x} \sigma u_2,\abstr{y} \sigma u_3)
  \in \llbracket A \rrbracket$, that is $\sigma t \in \llbracket A
  \rrbracket$.
  \qedhere
\end{itemize}
\end{proof}

\begin{corollary}[Termination]
  Let $t$ be a proof of $A$ in a context $\Gamma$. Then,
  $t$ strongly terminates.
\end{corollary}

\begin{proof}
  Let $\sigma$ be the substitution mapping each variable $x_i:A_i$ of
  $\Gamma$ to
  itself. Note that, by Proposition~\ref{Var}, this variable is an
  element of $\llbracket A_i \rrbracket$.  Then, $t = \sigma t$ is an
  element of $\llbracket A \rrbracket$. Hence, it strongly terminates.
\end{proof}

\subsection{Introduction property}

\begin{theorem}[Introduction]
\label{introductions}
Let $t$ be a closed irreducible proof of $A$.
\begin{itemize}
\item  If $A$ has the form $\top$, then $t$ is $\star$.

\item  The proposition $A$ is not $\bot$.
  
\item If $A$ has the form $B \Rightarrow C$, then $t$ has the form
  $\lambda \abstr{x}u$.

\item If $A$ has the form $B \wedge C$, then $t$ has the form
  $\pair{u}{v}$.

\item If $A$ has the form $B \vee C$, then $t$ has the form
  $\inl(u)$, $\inr(u)$, $u \plus v$, or $\bullet u$.

\item If $A$ has the form $B \odot C$, then $t$ has the form $\super{u}{v}$.
\end{itemize}
\end{theorem}

\begin{proof}
By induction on the structure of $t$.

We first remark that, as the proof $t$ is closed, it is not a
variable. Then, we prove that it cannot be an elimination.
\begin{itemize}

\item If $t = \elimtop(u,v)$, then $u$ is a closed
  irreducible proof of $\top$, hence, by induction
  hypothesis, it is $\star$
  and the proof $t$ is reducible.

\item If $t = \elimbot(u)$, then $u$ is a closed irreducible proof of
  $\bot$ and, by induction hypothesis, no such proof exists.

\item If $t = u~v$, then $u$ is a closed  irreducible proof of $B
  \Rightarrow A$, hence, by induction hypothesis, it has the form
  $\lambda \abstr{x}u_1$ and the proof $t$ is reducible.

\item If $t = \elimand^1(u,\abstr{x}v)$, then $u$ is a closed
  irreducible proof of $B \wedge C$, hence, by induction
  hypothesis, it has the form $\pair{u_1}{u_2}$
  and the proof $t$ is reducible.

\item If $t = \elimand^2(u,\abstr{x}v)$, then $u$ is a closed
  irreducible proof of $B \wedge C$, hence, by induction
  hypothesis, it has the form $\pair{u_1}{u_2}$
  and the proof $t$ is reducible.

\item If $t = \elimor(u,\abstr{x}v,\abstr{y}w)$, then $u$ is a closed
  irreducible proof of $B \vee C$, hence, by induction hypothesis, it
  has the form $\inl(u_1)$, $\inr(u_1)$, $u_1 \plus u_2$, or $\bullet
  u_1$ and the proof $t$ is reducible.

\item If $t = \elimsup(u,\abstr{x}v,\abstr{y}w)$, $t =
  \elimsup^1(u,\abstr{x}v)$, or $t = \elimsup^2(u,\abstr{x}v)$, then
  $u$ is a closed irreducible proof of $B \odot C$, hence, by
  induction hypothesis, it has the form $\super{u_1}{u_2}$ and the
  proof $t$ is reducible.
\end{itemize}

Hence, $t$ is an introduction, a sum, or a product.

It $t$ is $\star$, then $A$ is $\top$.  If it has the form $\lambda
\abstr{x}u$, then $A$ has the form $B \Rightarrow C$.  If it has the
form $\pair{u}{v}$, then $A$ has the form $B \wedge C$.  If it has the
form $\inl(u)$ or $\inr(u)$, then $A$ has the form $B \vee C$.  If it
has the form $\super{u}{v}$ then $A$ has the form $B \odot C$.  We
prove that, if it has the form $u \plus v$ or $\bullet u$, $A$ has
the form $B \vee C$.

\begin{itemize}
  \item 
If $t = u \plus v$, then 
the proofs $u$ and $v$ are two closed and irreducible proofs of
$A$. If $A = \top$ then, by induction hypothesis, they are both $\star$
and the proof $t$ is reducible.  If $A = \bot$ then, they are
closed irreducible proofs of $\bot$ and, by induction hypothesis, no such
proofs exist.  If $A$ has the form $B \Rightarrow C$ then, by
induction hypothesis, they are both abstractions and the proof $t$ is
reducible.  If $A$ has the form $B \wedge C$, then, by induction
hypothesis, they are both pairs and the proof $t$ is reducible.
If $A$ has the form $B \odot C$, then, by induction hypothesis, they
are both superpositions and the proof $t$ is reducible.  Hence,
$A$ has the form $B \vee C$.

\item
If $t = \bullet u$, then 
the proof $u$ is a closed and irreducible proof of
$A$. If $A = \top$ then, by induction hypothesis, $u$ is $\star$
and the proof $t$ is reducible.  If $A = \bot$ then, it is 
a closed irreducible proof of $\bot$ and, by induction hypothesis, no such
proof exists.  If $A$ has the form $B \Rightarrow C$ then, by
induction hypothesis, it is an abstraction and the proof $t$ is
reducible.  If $A$ has the form $B \wedge C$, then, by induction
hypothesis, it is a pair and the proof $t$ is reducible.
If $A$ has the form $B \odot C$, then, by induction
hypothesis, it is a superposition and the proof $t$ is reducible.
Hence,
$A$ has the form $B \vee C$.
\qedhere
\end{itemize}
\end{proof}

Note that we reap here the benefit of commuting, when possible, the
interstitial rules with the introduction rules, as, except for the
disjunction, closed irreducible proofs are genuine introductions.

\begin{proposition}[Disjunction]
If the proposition $A \vee B$ has a closed proof, then $A$ has a
closed proof or $B$ has a closed proof.
\end{proposition}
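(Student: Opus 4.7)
The plan is to reduce the closed proof to an irreducible form and then invoke the Introduction theorem, with an inner induction to handle the parallel case.

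First, given a closed proof $t$ of $A \vee B$, I would apply the Termination theorem to produce a normal form $t^*$ of $t$. Since reduction preserves typing (this is a standard subject-reduction argument, and each reduction rule in Figure~\ref{figreductionrules} is visibly type-preserving) and since reduction cannot introduce free variables, $t^*$ is again a closed irreducible proof of $A \vee B$. By Theorem~\ref{introductions}, $t^*$ must be of one of three forms: $\inl(u)$, $\inr(u)$, or $u \parallel v$.

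The first two cases are immediate: $\inl(u)$ requires $u$ to be a closed proof of $A$ (by typing), and symmetrically $\inr(u)$ gives a closed proof of $B$. The interesting case is $t^* = u \parallel v$, where $u$ and $v$ are both closed irreducible proofs of $A \vee B$ (they are subterms of an irreducible closed term, hence closed and irreducible themselves). Here I would proceed by a secondary induction on the size of the proof-term: applying Theorem~\ref{introductions} again to $u$ yields either $\inl(u')$, $\inr(u')$, or a strictly smaller parallel $u_1 \parallel u_2$, and the induction hypothesis disposes of the last case. Either branch ultimately bottoms out, by finiteness, on a leaf of the form $\inl(w)$ or $\inr(w)$, whose subterm $w$ is a closed proof of $A$ or $B$ respectively.

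The only real subtlety is this secondary induction on the parallel case: the parallel rule does not commute with the disjunction introductions (as emphasised in the discussion preceding Theorem~\ref{introductions}), so the normal form is not forced to be a genuine introduction, and one cannot conclude in a single step. Provided termination and the introduction property are in hand, however, this recursive descent through nested $\parallel$ nodes is straightforward, and no further reduction-theoretic ingredients are needed.
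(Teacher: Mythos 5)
Your proposal is correct and follows essentially the same route as the paper: normalize the closed proof, apply the Introduction theorem, and handle the $u \parallel v$ case by an inner induction (the paper inducts on the structure of the irreducible term, you on its size, which is the same descent through nested $\parallel$ nodes). The only cosmetic difference is that you spell out the subject-reduction and closedness-preservation facts that the paper leaves implicit.
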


\begin{proof}
  Consider a closed proof of $A \vee B$ and its irreducible form $t$.
  We prove, by induction on the structure of $t$, that $A$ has a
  closed proof or $B$ has a closed proof. By Theorem~\ref{introductions}, $t$
  has the form $\inl(u)$, $\inr(u)$, $u \plus
  v$, or $\bullet u$.  If it has the form $\inl(u)$, $u$ is a closed
  proof of $A$. If it has the form $\inr(u)$, $u$ is a closed proof of
  $B$. If it has the form $u \plus v$ or $\bullet u$, $u$ is a closed
  irreducible proof of $A \vee B$. Thus, by induction hypothesis, $A$
  has a closed proof or $B$ has a closed proof.
\end{proof}

\section{Quantifying non-determinism}
\label{secquantifying}

When we have a non-deterministic reduction system, we often want to
quantify the propensity of a proof to reduce to another.

One way to do so is to consider a field $({\mathcal S},+,\times)$ of {\it
  scalars}, for instance ${\mathbb Q}$, ${\mathbb R}$, or ${\mathbb
  C}$, with the usual sum $+$ and product $\times$,
take a different rule $\top$-i for each scalar
and a different rule prod for each scalar.  So, for each scalar $a$,
we have a closed irreducible proof of $\top$ and we write $a.\star$
for this proof.  In the same way, we write $a \bullet t$ for the proof
obtained by applying, to the proof $t$, the rule prod corresponding to
the scalar $a$.  As the closed irreducible proofs of $\top$ are in
one-to-one correspondence with the elements of ${\mathcal S}$, those
of $\top \odot \top$ are in one-to-one with the elements of ${\mathcal
  S}^2$, those of $(\top \odot \top) \odot (\top \odot \top)$ are in
one-to-one correspondence with the elements of ${\mathcal S}^4$, etc.

In the $\odot$-calculus, the proof $\star \plus \star$ reduces to
$\star$.  Now, the proof ${a.\star} \plus b.\star$ reduces to $(a +
b).\star$, where the scalars are added.  In the same way the proof $a
\bullet b.\star$ reduces to $(a \times b).\star$, where the scalars
are multiplied.  In the $\odot$-calculus, the proof
$\elimtop(\star,t)$ reduces to $\bullet t$.  Now, the proof
$\elimtop(a.\star,t)$ reduces neither to $t$, as it does in the usual
intuitionistic propositional logic, nor to $\bullet t$ as it does in
the system of Section~\ref{seclogic}, but to $a \bullet t$.

\subsection{The \texorpdfstring{$\odot^{\mathcal S}$}{O-S}-calculus}

We define the $\odot^{\mathcal S}$-calculus (read: ``the sup-S-calculus''), by
extending the grammar of proofs as follows
\begin{align*}
  t =~& x \mid t \plus u \mid a \bullet t 
        \mid a.\star \mid \elimtop(t,u) \mid \elimbot(t)\\
      &\mid \lambda \abstr{x}t\mid t~u
        \mid \pair{t}{u} \mid 
\elimand^1(t,\abstr{x}u) \mid
\elimand^2(t,\abstr{x}u)\\
      &\mid \inl(t)\mid \inr(t)\mid \elimor(t,\abstr{x}u,\abstr{y}v)\\
&\mid \super{t}{u} \mid \elimsup(t,\abstr{x}u,\abstr{y}v)
\mid \elimsup^1(t,\abstr{x}u)
\mid \elimsup^2(t,\abstr{x}u)
\end{align*}
where $a$ is a scalar.

The typing rules are similar to those
of Figure~\ref{figtypingrules} except the rule
$$\irule{}
        {\Gamma \vdash \star:\top}
        {\mbox{$\top$-i}}$$
which is replaced with
$$\irule{}
        {\Gamma \vdash a.\star:\top}
        {\mbox{$\top$-i} (a)}$$
and the rule
$$\irule{\Gamma \vdash t:A}
        {\Gamma \vdash \bullet t:A}
        {\mbox{prod}}$$
which is replaced with
$$\irule{\Gamma \vdash t:A}
        {\Gamma \vdash a \bullet t:A}
        {\mbox{prod}(a)}$$

The reduction rules are those of Figure~\ref{figureductionrulesscalar}.  Note
that the scalar computation is
implicit and the terms expressing scalars are just constants. For
instance $2.\star \plus 3.\star$ reduces directly to $5.\star$.

The $\odot^{\mathcal S}$-calculus is thus a $\lambda$-calculus
equipped with a notion of linear combination of terms, such as Lineal
\cite{Lineal}, the Algebraic $\lambda$-calculus \cite{Vaux2009}, etc.

\begin{figure}[t]
  \[
    \begin{array}{r@{\,}l}
      \elimtop(a.\star,t) & \longrightarrow  a \bullet t\\
      (\lambda \abstr{x}t)~u & \longrightarrow  (u/x)t\\
      \elimand^1(\pair{t}{u}, \abstr{x}v) & \longrightarrow  (t/x)v\\
      \elimand^2(\pair{t}{u}, \abstr{x}v) & \longrightarrow  (u/x)v\\
      \elimor(\inl(t),\abstr{x}v,\abstr{y}w) & \longrightarrow  (t/x)v\\
      \elimor(\inr(u),\abstr{x}v,\abstr{y}w) & \longrightarrow  (u/y)w\\
      \elimsup(\super{t}{u},\abstr{x}v,\abstr{y}w) & \longrightarrow  (t/x)v\\
      \elimsup(\super{t}{u},\abstr{x}v,\abstr{y}w) & \longrightarrow  (u/y)w\\
      \elimsup^1(\super{t}{u},\abstr{x}v) & \longrightarrow  (t/x)v\\
      \elimsup^2(\super{t}{u},\abstr{x}v) & \longrightarrow  (u/x)v\\
      \\
      {a.\star} \plus b.\star&\longrightarrow  (a+b).\star\\
      (\lambda \abstr{x}t) \plus (\lambda \abstr{x}u) & \longrightarrow  \lambda \abstr{x}(t \plus u)\\
      \pair{t}{u} \plus \pair{v}{w}
      & \longrightarrow  \pair{t \plus v}{u \plus w} 
      \\
      \elimor(t \plus u,\abstr{x}v,\abstr{y}w) & \longrightarrow 
      \elimor(t,\abstr{x}v,\abstr{y}w)
      \plus
      \elimor(u,\abstr{x}v,\abstr{y}w)\\
      \super{t}{u} \plus \super{v}{w}
      & \longrightarrow  \super{t \plus v}{u \plus w} 
      \\
      \\
      a \bullet b.\star&\longrightarrow  (a \times b).\star\\
      a \bullet \lambda \abstr{x} t &\longrightarrow  \lambda \abstr{x} a \bullet t\\
      a \bullet \pair{t}{u} &\longrightarrow  \pair{a \bullet t}{a \bullet u}\\

      \elimor(a \bullet t,\abstr{x}v,\abstr{y}w) & \longrightarrow 
      a \bullet \elimor(t,\abstr{x}v,\abstr{y}w)\\
      a \bullet \super{t}{u} 
      & \longrightarrow  \super{a \bullet t}{a \bullet u} 
    \end{array}
  \]
  \caption{The reduction rules of the
$\odot^{\mathcal S}$-calculus \label{figureductionrulesscalar}}
\end{figure}

\subsection{Properties}

\begin{theorem}[Subject reduction]
  If $\Gamma \vdash t:A$ and $t \lra u$, then $\Gamma \vdash u:A$.
\end{theorem}

\begin{proof}
We first prove a substitution property and then proceed by induction
on the definition of the relation $\lra$.
\end{proof}

\begin{theorem}[Confluence]
This system of Figure~\ref{figureductionrulesscalar} without the rules
$$\elimsup(\super{t}{u},\abstr{x}v,\abstr{y}w) \longrightarrow  (t/x)v$$
$$\elimsup(\super{t}{u},\abstr{x}v,\abstr{y}w) \longrightarrow  (u/y)w$$
is confluent.
\end{theorem}

\begin{proof}
This system also is left linear and it has no critical pairs. Thus, by 
\cite[Theorem 6.8]{Nipkow} it is confluent.
\end{proof}

\begin{theorem}[Termination]
Let $t$ be a proof of $A$ in a context $\Gamma$. Then, $t$ strongly
terminates.
\end{theorem}

\begin{proof}
Consider a translation $^\circ$ of proofs from the $\odot^{\mathcal
  S}$-calculus to the $\odot$-calculus obtained by replacing the rules
$\mbox{$\top$-i}(a)$ with the rule $\top$-i and the rules
$\mbox{prod}(a)$ with the rule prod: $(a.\star)^\circ = \star$, $(a
\bullet t)^\circ = \bullet t^\circ$, etc.  If $t \longrightarrow u$ in
the $\odot^{\mathcal S}$-calculus, then $t^\circ \longrightarrow
u^\circ$ in the $\odot$-calculus.  Hence, the reduction in the
$\odot^{\mathcal S}$-calculus terminates.
\end{proof}

\begin{theorem}[Introduction]
Let $t$ be a closed irreducible proof of $A$.
\begin{itemize}
\item  If $A$ has the form $\top$, then $t$ is $a.\star$.

\item  The proposition $A$ is not $\bot$.
  
\item If $A$ has the form $B \Rightarrow C$, then $t$ has the form
  $\lambda \abstr{x}u$.

\item If $A$ has the form $B \wedge C$, then $t$ has the form
  $\pair{u}{v}$.

\item If $A$ has the form $B \vee C$, then $t$ has the form
  $\inl(u)$, $\inr(u)$, $u \plus v$, or $a \bullet u$.

\item If $A$ has the form $B \odot C$, then $t$ has the form $\super{u}{v}$.
\end{itemize}
\end{theorem}

\begin{proof}
Similar to that of Theorem~\ref{introductions}.
\end{proof}

\subsection{Quantifying non-determinism}

When ${\mathcal S}$ is ${\mathbb Q}$, ${\mathbb R}$, or ${\mathbb C}$, we can use
the scalars $a$ and $b$ to assign probabilities to the reductions
\[
  \elimsup(\super{t}{u},\abstr{x}v,\abstr{y}w)  \longrightarrow (t/x)v
\]
and
\[
\elimsup(\super{t}{u},\abstr{x}v,\abstr{y}w) \longrightarrow (u/y)w
\]

\begin{example}
\label{strategy}
We define a strategy where the rules
\[
  \elimsup(\super{t}{u},\abstr{x}v,\abstr{y}w)  \longrightarrow (t/x)v
\]
and
\[
\elimsup(\super{t}{u},\abstr{x}v,\abstr{y}w) \longrightarrow (u/y)w
\]
apply only when $t$ and $u$ are closed irreducible proofs.

In this case, if $a$ and $b$ are not both $0$, we assign the probabilities
$\tfrac{|a|^2}{|a|^2 + |b|^2}$ and $\tfrac{|b|^2}{|a|^2 + |b|^2}$ 
to the reductions
\[
  \elimsup(\super{a.\star}{b.\star},\abstr{x}v,\abstr{y}w)  \longrightarrow (a.\star/x)v
\]
and
\[
\elimsup(\super{a.\star}{b.\star},\abstr{x}v,\abstr{y}w) \longrightarrow (b.\star/y)w
\]
And if either $a = b = 0$ or $t$ and $u$ are proofs of propositions
different from $\top$, we assign any probability, for instance
$\frac{1}{2}$, to these reductions.
\end{example}

\section{Application to quantum computing}
\label{secquantum}

We now show that the $\odot^{\mathbb C}$-calculus, with the reduction
strategy of Example~\ref{strategy}, restricting the reduction of
$\elimsup(\super{t}{u},\abstr{x}v,\abstr{y}w)$ to the cases where $t$
and $u$ are closed irreducible proofs, contains the core of a small
quantum programming language.

\subsection{Bits}

\begin{definition}[Bit]
Let $\B = \top \vee \top$. The proofs
${\bf 0} = \inl(1.\star)$ and ${\bf 1} = \inr(1.\star)$ are closed
irreducible proofs of $\B$.
\end{definition}

Note that the proofs $\inl(1.\star)$ and $\inr(1.\star)$ are not the only
closed irreducible proofs of $\B$, for example $\inl(2.\star)$ and
$\inl(1.\star) \plus \inr(1.\star)$ also are.

\begin{definition}[Test]
The test operator is defined as
$$\test(t,u,v) = \elimor(t,\abstr{x}u,\abstr{y}v)$$
where $x$ and $y$ are variables not occurring in $u$ and $v$.  Note that 
$\test({\bf 0},u,v) \longrightarrow u$ and $\test({\bf 1},u,v)
\longrightarrow v$.
\end{definition}

\subsection{Qubits}

A $n$-qubit, for $n \geq 1$, is a vector of ${\mathbb C}^{2^n}$ of
norm $1$.  We show now how $n$-qubits, and more generally vectors of
${\mathbb C}^{2^n}$, for $n \geq 0$, can be expressed as proofs in the
$\odot^{\mathbb C}$-calculus.

\begin{definition}[The proposition $\Q^{\otimes n}$]
The proposition $\Q^{\otimes n}$ is defined by induction on $n$ as follows
  \begin{itemize}
  \item $\Q^{\otimes 0} = \top$,
  \item $\Q^{\otimes n+1} = \Q^{\otimes n} \odot \Q^{\otimes n}$.
    \end{itemize}
\end{definition}

Note that, in this definition, the binary connective $\odot$ is always
used with two identical propositions: $A \odot A$.

The proposition $\Q^{\otimes 1} = \top \odot \top$ is sometimes written $\Q$.

The closed irreducible proofs of $\Q^{\otimes n}$ and
the vectors of ${\mathbb C}^{2^n}$
are in one-to-one
correspondence: to each closed irreducible proof $t$ of
$\Q^{\otimes n}$, we
associate a vector $\underline{t}$ of ${\mathbb C}^{2^n}$ and to each
vector ${\bf u}$ of ${\mathbb C}^{2^n}$, we associate a closed irreducible
proof $\overline{\bf u}$ of $\Q^{\otimes n}$.

\begin{definition}[One-to-one correspondance]
To each closed irreducible proof $t$ of $\Q^{\otimes n}$,  we
associate a vector $\underline{t}$ of ${\mathbb C}^{2^n}$ as follows.
\begin{itemize}
\item
  If $n = 0$, then $t = a.\star$. We let $\underline{t} =
\left(\begin{smallmatrix} a \end{smallmatrix}\right)$.

\item
  If $n = n' + 1$, then $t = \super{u}{v}$.
  We let $\underline{t}$ be the vector with two
blocks $\underline{u}$ and $\underline{v}$:  $\underline{t} =
\left(\begin{smallmatrix}
  \underline{u}\\\underline{v} \end{smallmatrix}\right)$.
\end{itemize}

To each vector ${\bf u}$ of ${\mathbb C}^{2^n}$, we associate a closed
irreducible proof $\overline{\bf u}$ of $\Q^{\otimes n}$.

\begin{itemize}
\item If $n = 0$, then ${\bf u} =
  \left(\begin{smallmatrix} a \end{smallmatrix}\right)$.
  We let $\overline{\bf u} = a.\star$.

\item If $n = n' + 1$, 
let ${\bf u}_1$ and ${\bf u}_2$ be
  the two blocks of ${\bf u}$ of $2^{n'}$ lines, so ${\bf u} =
  \left(\begin{smallmatrix} {\bf u}_1\\ {\bf
      u}_2\end{smallmatrix}\right)$.  We let
    $\overline{\bf u} = \super{\overline{{\bf u}_1}}{\overline{{\bf
          u}_2}}$.
\end{itemize}
\end{definition}

In particular, the proof $0_{\Q^{\otimes n}}$ is defined as $\overline{\bf 0}$,
where ${\bf 0}$ is the zero vector of ${\mathbb C}^{2^n}$.

\begin{example}
The 2-qubit $\ket{01} = \left(\begin{smallmatrix}
  0\\ 1\\ 0\\ 0 \end{smallmatrix}\right)$ is expressed as the proof
$\overline{\ket{01}} = \super{\super{0.\star}{1.\star}}{\super{0.\star}{0.\star}}$
and the entangled 2-qubit
$\frac{1}{\sqrt{2}}.\ket{00} +
\frac{1}{\sqrt{2}}.\ket{11} = 
\left(\begin{smallmatrix}
  \frac{1}{\sqrt{2}} \\ 0\\ 0\\
\frac{1}{\sqrt{2}} \end{smallmatrix}\right)$ 
as the proof 
$\overline{\frac{1}{\sqrt{2}}.\ket{00} + \frac{1}{\sqrt{2}}.\ket{11}}
=
\super{\super{\frac{1}{\sqrt{2}}.\star}{0.\star}}{\super{0.\star}{\frac{1}{\sqrt{2}}.\star}}$.
\end{example}

We extend the definition of $\underline{t}$ to any closed proof of
$\Q^{\otimes n}$, $\underline{t}$ is by definition $\underline{t'}$,
where $t'$ is the irreducible form of $t$.

We also take the convention that any closed irreducible proof $u$ of
$\Q^{\otimes n}$, expressing a non-zero vector $\underline{u} \in
{\mathbb C}^{2^n}$, is an alternative expression of the $n$-qubit
$\frac{\underline{u}}{\|\underline{u}\|}$.  For example, the qubit
$\frac{1}{\sqrt{2}}.\ket{0} + \frac{1}{\sqrt{2}}.\ket{1}$ is
expressed as the proof
$\super{\frac{1}{\sqrt{2}}.\star}{\frac{1}{\sqrt{2}}.\star}$, but also
as the proof $\super{1.\star}{1.\star} = \overline{\ket{0} +
  \ket{1}}$.

The next Propositions show that the symbol $\plus$ expresses the sum of
vectors and the symbol $\bullet$, the product of a vector by a scalar.

\begin{proposition}[Sum of vectors]
  \label{parallelsum}
  Let $u$ and $v$ be two closed proofs of $\Q^{\otimes n}$. 
  Then, $\underline{u \plus v} = \underline{u} + \underline{v}$.
\end{proposition}

\begin{proof}
By induction on $n$. 

\begin{itemize}
  \item 
If $n = 0$, then $u \lra^* a.\star$, $v \lra^* b.\star$, $\underline{u} =
\left(\begin{smallmatrix} a \end{smallmatrix}\right)$, $\underline{v}
= \left(\begin{smallmatrix} b \end{smallmatrix}\right)$.  Thus,
$\underline{u \plus v} = \underline{{a.\star} \plus b.\star} =
\underline{(a + b).\star} = \left(\begin{smallmatrix} a +
  b \end{smallmatrix}\right) = \left(\begin{smallmatrix}
  a \end{smallmatrix}\right) + \left(\begin{smallmatrix}
  b \end{smallmatrix}\right) = \underline {u} + \underline{v}$.

\item 
  If $n = n' + 1$, then
  $u \lra^* \super{u_1}{u_2}$,  $v \lra^* \super{v_1}{v_2}$,
$\underline{u} = \left(\begin{smallmatrix} \underline{u_1} \\ \underline{u_2} 
 \end{smallmatrix}\right)$ and 
$\underline{v} = \left(\begin{smallmatrix} \underline{v_1} \\ \underline{v_2} 
  \end{smallmatrix}\right)$.
Thus, using the induction hypothesis,
$\underline{u \plus v} = 
\underline{\super{u_1}{u_2}
\plus 
\super{v_1}{v_2}}
=
\underline{\super{u_1 \plus v_1}{u_2 \plus v_2}}
=
\left(\begin{smallmatrix} \underline{u_1 \plus v_1} \\
                          \underline{u_2 \plus v_2}
\end{smallmatrix}\right)
= 
\left(\begin{smallmatrix} \underline{u_1} + \underline{v_1} \\
                          \underline{u_2} + \underline{v_2}
\end{smallmatrix}\right)
=
\left(\begin{smallmatrix} \underline{u_1} \\
                          \underline{u_2}
\end{smallmatrix}\right)
+
\left(\begin{smallmatrix} \underline{v_1} \\
                          \underline{v_2}
\end{smallmatrix}\right) = 
  \underline{u} + \underline{v}$.
\qedhere
\end{itemize}
\end{proof}

\begin{proposition}[Product of a vector by a scalar]
  \label{parallelprod}
  Let $u$ be a closed proof of $\Q^{\otimes n}$. 
  Then, $\underline{a \bullet u} = a \underline{u}$.
\end{proposition}

\begin{proof}
By induction on $n$.

\begin{itemize}
  \item 
If $n = 0$, then $u \lra^* b.\star$, $\underline{u} =
\left(\begin{smallmatrix} b \end{smallmatrix}\right)$.
Thus,
$\underline{a \bullet u} = \underline{a \bullet b.\star} =
\underline{(a \times b).\star} = \left(\begin{smallmatrix} a \times
  b \end{smallmatrix}\right) = 
  a \left(\begin{smallmatrix} b \end{smallmatrix}\right) = a \underline {u}$.

\item 
If $n = n' + 1$, $u \lra^* \super{u_1}{u_2}$,  
$\underline{u} = \left(\begin{smallmatrix} \underline{u_1} \\ \underline{u_2} 
 \end{smallmatrix}\right)$.
Thus, using the induction hypothesis,
$\underline{a \bullet u} = 
\underline{a \bullet \super{u_1}{u_2}}
=
\underline{\super{a \bullet u_1}{a \bullet u_2}}
=
\left(\begin{smallmatrix} \underline{a \bullet u_1} \\
                          \underline{a \bullet u_2}
\end{smallmatrix}\right)
= 
\left(\begin{smallmatrix} a \underline{u_1}  \\
                          a \underline{u_2} 
\end{smallmatrix}\right)
=
a \left(\begin{smallmatrix} \underline{u_1} \\
                          \underline{u_2}
\end{smallmatrix}\right)
= a \underline{u}$.
\qedhere
\end{itemize}
\end{proof}

\subsection{Matrices}

The information-preserving, reversible, and deterministic unitary
operators are expressed with the proof constructors $\elimsup^1$ and
$\elimsup^2$.

\begin{theorem}[Matrices]
\label{matrices}
Let $M$ be a matrix with $2^m$ columns and $2^n$ lines, then there
exists a closed proof $t$ of $\Q^{\otimes m} \Rightarrow \Q^{\otimes
  n}$ such that, for all vectors ${\bf u} \in {\mathbb C}^{2^m}$,
$\underline{t~\overline{\bf u}} = M {\bf u}$.
\end{theorem}

\begin{proof}
By induction on $A$.
\begin{itemize}

\item If $m = 0$, then $M$ is a matrix of one column and
  $2^n$ lines. Hence, it is also a vector of $2^n$ lines.
  We take
  $$t = \lambda \abstr{x} \elimtop(x,\overline{M})$$
  Let ${\bf u} \in {\mathcal S}^1$, ${\bf u}$ has the form
  $\left(\begin{smallmatrix}  a \end{smallmatrix}\right)$ and 
  $\overline{\bf u} = a.\star$.

  Then, using
  Proposition~\ref{parallelprod}, we have $\underline{t~\overline{\bf u}} 
= \underline{\elimtop(\overline{\bf u},\overline{M})}
= \underline{\elimtop(a.\star,\overline{M})}
= \underline{a \bullet \overline{M}}
= a \underline{\overline{M}} = a M = M
\left(\begin{smallmatrix}  a\end{smallmatrix}\right) =
  M {\bf u}$.

\item If $m = m'+1$, then let $M_1$ and $M_2$ be the two blocks of $M$
  of $2^{m'}$ columns, so $M = \left(\begin{smallmatrix} M_1 &
    M_2\end{smallmatrix}\right)$.

    By induction hypothesis, there exist closed proofs $t_1$ and $t_2$ of
    the proposition $\Q^{m'} \Rightarrow \Q^n$ such that, for all
    vectors ${\bf u}_1, {\bf u}_2 \in {\mathbb C}^{2^{m'}}$, we have
    $\underline{t_1~\overline{{\bf u}_1}} = M_1 {\bf u}_1$ and
    $\underline{t_2~\overline{{\bf u}_2}}= M_2 {\bf u}_2$.  We take
    $$t = \lambda \abstr{x} (\elimsup^1(x, \abstr{y} (t_1~y)) \plus \elimsup^2(x,
    \abstr{z} (t_2~z)))$$
    Let ${\bf u} \in {\mathbb C}^{2^m}$, and ${\bf u}_1$ and ${\bf
      u}_2$ be the two blocks of $2^{m'}$ lines of ${\bf u}$, so ${\bf
      u} = \left(\begin{smallmatrix} {\bf u}_1 \\ {\bf
        u}_2 \end{smallmatrix}\right)$, and $\overline{\bf u} =
    \super{\overline{{\bf u}_1}}{\overline{{\bf u}_2}}$.

 Then, using Proposition~\ref{parallelsum}, $\underline{t~\overline{\bf
     u}} = \underline{ \elimsup^1(\super{\overline{{\bf
         u}_1}}{\overline{{\bf u}_2}}, \abstr{y}
   (t_1~y)) \plus \elimsup^2(\super{\overline{{\bf
         u}_1}}{\overline{{\bf u}_2}}, \abstr{z}
   (t_2~z))}
   = \underline{t_1~\overline{{\bf u}_1} \plus t_2~\overline{{\bf u}_2}}
   = \underline{t_1~\overline{{\bf u}_1}} + \underline{t_2~\overline{{\bf u}_2}}
   = M_1 {\bf u}_1 + M_2 {\bf u}_2
   = \left(\begin{smallmatrix} M_1 & M_2 \end{smallmatrix}\right)
   \left(\begin{smallmatrix} {\bf u}_1 \\ {\bf u}_2  \end{smallmatrix}\right)
   = M {\bf u}$.
 \qedhere
\end{itemize}
\end{proof}

\begin{example}[Matrices with two colums and two lines]
The matrix
$\left(\begin{smallmatrix} a & c\\b & d \end{smallmatrix}\right)$
is expressed as the proof
$$t = \lambda \abstr{x} (\elimsup^1(x,\abstr{y}
\elimtop(y,\super{a.\star}{b.\star})) \plus \elimsup^2(x,\abstr{z}
\elimtop(z,\super{c.\star}{d.\star})))$$
Then
\begin{align*}
t~\super{e.\star}{f.\star} & \lra
\elimsup^1(\super{e.\star}{f.\star},\abstr{y} \elimtop(y,\super{a.\star}{b.\star}))
\plus
\elimsup^2(\super{e.\star}{f.\star},\abstr{z} \elimtop(z,\super{c.\star}{d.\star}))\\
& \lra^* \elimtop(e.\star,\super{a.\star}{b.\star}) \plus
\elimtop(f.\star,\super{c.\star}{d.\star})\\
& \lra^* e \bullet \super{a.\star}{b.\star} \plus
f \bullet \super{c.\star}{d.\star}\\
& \lra^*
\super{(a \times e).\star}{(b \times e).\star}
\plus 
\super{(c \times f).\star}{(d \times f).\star}\\
& \lra^*
\super{(a \times e + c \times f).\star}{(b \times e + d \times f).\star}
\end{align*}
For instance, the Hadamard matrix 
$H = \left(\begin{smallmatrix} \tfrac{1}{\sqrt{2}} & \tfrac{1}{\sqrt{2}}\\ 
                               \tfrac{1}{\sqrt{2}} &\tfrac{-1}{\sqrt{2}}
 \end{smallmatrix}\right)$
is expressed as the proof
$$\lambda \abstr{x} \elimsup^1(x,\abstr{y}
\elimtop(y,\super{\tfrac{1}{\sqrt{2}}.\star}{\tfrac{1}{\sqrt{2}}.\star})
) \plus \elimsup^2(x,\abstr{z}
\elimtop(z,\super{\tfrac{1}{\sqrt{2}}.\star}{\tfrac{-1}{\sqrt{2}}.\star})
)$$
\end{example}

\subsection{Probabilities}

\begin{definition}[Norm]
  Let $t$ be a closed irreducible proof of $\Q^{\otimes n}$, we define the
  square of the norm $\|t\|^2$ of $t$ by induction on $n$.
  \begin{itemize}
    \item If $n = 0$, then $t = a.\star$ and we take $\|t\|^2 = |a|^2$.
\item  If $n = n'+1$, then $t = \super{u_1}{u_2}$ and we take
  $\|t\|^2 = \|u_1\|^2 + \|u_2\|^2$.
  \end{itemize}
\end{definition}

If $t$ is a closed irreducible proof of $\Q^{\otimes n}$ of the form
  $\super{u_1}{u_2}$, 
  where $\|u_1\|^2$ and $\|u_2\|^2$ are not both $0$,
  then we assign the probability 
  $\frac{\|u_1\|^2}{\|u_1\|^2 + \|u_2\|^2}$
  to the reduction 
  $$\elimsup(\super{u_1}{u_2},\abstr{x}v,\abstr{y}w) \longrightarrow (u_1/x)v$$
  and the probability 
  $\frac{\|u_2\|^2}{\|u_1\|^2 + \|u_2\|^2}$
  to the reduction 
$$\elimsup(\super{u_1}{u_2},\abstr{x}v,\abstr{y}w) \longrightarrow (u_2/y)w$$

If $\|u_1\|^2 = \|u_2\|^2 = 0$, or $u_1$ and $u_2$ are proofs of
propositions of a different form, we associate any probability, for
example $\tfrac{1}{2}$, to both reductions.

\begin{example}
If $t$ is a closed irreducible proof of $\Q$ of the form
$\super{a.\star}{b.\star}$, where $a$ and $b$ are not both $0$, then we assign
the probability
$\tfrac{|a|^2}{|a|^2 + |b|^2}$
to the reduction
$$\elimsup(\super{a.\star}{b.\star},\abstr{x}v,\abstr{y}w) \longrightarrow (a.\star/x)v$$
and
$\tfrac{|b|^2}{|a|^2 + |b|^2}$  to the reduction
$$\elimsup(\super{a.\star}{b.\star},\abstr{x}v,\abstr{y}w) \longrightarrow (b.\star/y)w$$

If $t$ is a closed irreducible proof of $\Q^{\otimes 2}$ of the form
$\super{\super{a.\star}{b.\star}}{\super{c.\star}{d.\star}}$ where $a$, $b$,
$c$, and $d$ are not all $0$, then we assign the probability 

$\tfrac{|a|^2 + |b|^2}{|a|^2 + |b|^2 + |c|^2 + |d|^2}$ to the reduction
$$\elimsup(\super{\super{a.\star}{b.\star}}{\super{c.\star}{d.\star}},\abstr{x}v,\abstr{y}w) \longrightarrow (\super{a.\star}{b.\star}/x)v$$
and $\tfrac{|c|^2 + |d|^2}{|a|^2 + |b|^2 + |c|^2 + |d|^2}$ to the reduction
$$\elimsup(\super{\super{a.\star}{b.\star}}{\super{c.\star}{d.\star}},\abstr{x}v,\abstr{y}w) \longrightarrow ((\super{c.\star}{d.\star})/y)w$$
\end{example}

\subsection{Measure}
\label{measurement}

\begin{figure}[t]
  \[
    \begin{array}{r@{\,}l}
      \pi_n &= \lambda \abstr{x} \elimsup(x, \abstr{y} \super{y}{0_{\Q^{\otimes n-1}}},\abstr{z} \super{0_{\Q^{\otimes n-1}}}{z})\\
\pi'_n &= \lambda \abstr{x} \elimsup(x, \abstr{y} {\bf 0}, \abstr{z} {\bf 1})\\
\pi''_n & = \lambda \abstr{x} \elimsup(x, \abstr{y}
\pair{\super{y}{0_{\Q^{\otimes n-1}}}}
     {{\bf 0}},
     \abstr{z} \pair{\super{0_{\Q^{\otimes n-1}}}{z}}
     {{\bf 1}})
    \end{array}
  \]
\caption{Measurement operators\label{figmesure}}
\end{figure}

The information-erasing, non-reversible, and non-deterministic
measurement operators are expressed with the proof constructor
$\elimsup$.

Several such operators are defined in Figure~\ref{figmesure}.
Let $n$ be a non-zero natural number and $t$ be a closed irreducible
proof of $\Q^{\otimes n}$ of the form $\super{u_1}{u_2}$, such that
$\|t\|^2 = \|u_1\|^2 + \|u_2\|^2 \neq 0$, expressing the state of an
$n$-qubit.  The proof $\pi_n~t$ of the proposition $\Q^{\otimes n}$
reduces, with probabilities $\tfrac{\|u_1\|^2}{\|u_1\|^2 + \|u_2\|^2}$
and $\tfrac{\|u_2\|^2}{\|u_1\|^2 + \|u_2\|^2}$ to
$\super{u_1}{0_{\Q^{\otimes n-1}}}$ and to $\super{0_{\Q^{\otimes
      n-1}}}{u_2}$.  It is the state of the $n$-qubit, after the partial
measure of the first qubit.  The proof $\pi'_n~t$ of the proposition
$\B$ reduces, with the same probabilities, to ${\bf 0}$ and to ${\bf
  1}$.  It is the ``classical'' result of the measure.  The proof
$\pi''_n~t$ of the proposition $\Q^{\otimes n} \wedge \B$ reduces,
with the same probabilities, to $\pair{\super{u_1}{0_{\Q^{\otimes
        n-1}}}}{{\bf 0}}$ and to $\pair{\super{0_{\Q^{\otimes
        n-1}}}{u_2}}{{\bf 1}}$.  It is the pair formed with the state of
the $n$-qubit, after the measure, and the ``classical'' result of the
measure.

\begin{example}
In the case $n = 1$, 
if $t$ is a closed irreducible proof of $\Q^{\otimes 1}$ of the form
$\super{a.\star}{b.\star}$, such that $a$ and $b$ are not both $0$, then the
proof $\pi_1~t$ of the proposition $\Q^{\otimes 1}$ reduces, with
probabilities $\tfrac{|a|^2}{|a|^2 + |b|^2}$ and $\tfrac{|b|^2}{|a|^2
  + |b|^2}$, to $\super{a.\star}{0.\star}$, that is an expression of $\ket{0}$,
and to $\super{0.\star}{b.\star}$, that is an expression of $\ket{1}$.
The proof $\pi'_1~t$ of the proposition $\B$ reduces, with
the same probabilities, to ${\bf 0}$ and to ${\bf 1}$.
The proof $\pi''_1~t$ of the proposition $\Q^{\otimes 1} \wedge \B$
reduces, with the same probabilities, to $\pair{\super{a.\star}{0.\star}}{{\bf
    0}}$ and to $\pair{\super{0.\star}{b.\star}}{{\bf 1}}$.
\end{example}

\begin{example}
  In the case $n = 2$, if $t$ is a closed irreducible proof of
  $\Q^{\otimes 2}$ of the form
  $\super{\super{a.\star}{b.\star}}{\super{c.\star}{d.\star}}$ where
  $a$, $b$, $c$, and $d$ are not all $0$, then the proof $\pi_2~t$ of
  the proposition $\Q^{\otimes 2}$ reduces, with probabilities
  $\tfrac{|a|^2 + |b|^2}{|a|^2 + |b|^2 + |c|^2 + |d|^2}$ and
  $\tfrac{|c|^2 + |d|^2}{|a|^2 + |b|^2 + |c|^2 + |d|^2}$, to
  $\super{\super{a.\star}{b.\star}}{\super{0.\star}{0.\star}}$ and to
  $\super{\super{0.\star}{0.\star}}{\super{c.\star}{d.\star}}$.  The
  proof $\pi'_2~t$ of the proposition $\B$ reduces, with the same
  probabilities, to ${\bf 0}$ and to ${\bf 1}$.  The proof $\pi''_2~t$
  of the proposition $\Q^{\otimes 2} \wedge \B$ reduces, with the same
  probabilities, to
  $\pair{\super{\super{a.\star}{b.\star}}{\super{0.\star}{0.\star}}}{{\bf
      0}}$ and to
  $\pair{\super{\super{0.\star}{0.\star}}{\super{c.\star}{d.\star}}}{{\bf
      1}}$.
\end{example}

Using the representation of matrices, it is possible to define
measurement operators that measure in any basis, by changing basis,
measuring, and changing basis again.  This way, it is also possible to
define measurement operators that partially measure, not the first
qubit of a $n$-qubit, but any.

\subsection{An example: Deutsch's algorithm}

Deutsch's algorithm allows to decide whether a 1-bit to 1-bit function $f$ is
constant or not, applying an oracle $U_f$, implementing $f$, only once. It is an
algorithm operating on 2-qubits. It proceeds in four steps.
\begin{enumerate}
\item 
Prepare the initial state
\(
\ket{+-} = \frac{1}{2} \ket{00} -\frac{1}{2} \ket{01}
+\frac{1}{2} \ket{10} -\frac{1}{2} \ket{11}
\).

\item 
Apply to it the unitary operator
$$U_f = \left(\begin{smallmatrix}
\mbox{\it if}(f 0,1,0) & \mbox{\it if}(f 0,0,1) & 0 & 0\\
\mbox{\it if}(f 0,0,1) & \mbox{\it if}(f 0,1,0) & 0 & 0\\
0 & 0 & \mbox{\it if}(f 1,1,0) & \mbox{\it if}(f 1,0,1)\\
0 & 0 & \mbox{\it if}(f 1,0,1) & \mbox{\it if}(f 1,1,0)
\end{smallmatrix}\right)$$
where $\mbox{\it if}(0,n,m) = n$ and $\mbox{\it if}(1,n,m) = m$.

Note that $U_f\ket{x,y} = \ket{x,y \oplus f(x)}$
  for $x,y\in\{0,1\}$, where $\oplus$ is the exclusive disjunction.

\item 
  Apply to it the unitary operator
  $$H\otimes I=
  \frac 1{\sqrt 2}\left( \begin{smallmatrix}
    1 & 0 & 1 & 0 \\
    0 & 1 & 0 & 1 \\
    1 & 0 & -1 & 0 \\
    0 & 1 & 0 & -1 
  \end{smallmatrix}\right)
  $$

\item
Measure the first qubit. The output is 
$0$,
if $f$ is constant and $1$ if it is not.
\end{enumerate}

In the $\odot^{\mathbb C}$-calculus, the initial state is
$$\overline{\ket{+-}} =
\super{\super{\frac{1}{2}.\star}{\frac{-1}{2}.\star}}
{\super{\frac{1}{2}.\star}{\frac{-1}{2}.\star}}$$
the function mapping the function $f$ to the operator $U_f$ is
expressed as in the proof of Theorem~\ref{matrices}
$$U = \lambda \abstr{f} \lambda \abstr{t}
(\elimsup^1(t,\abstr{x}
(
\elimsup^1(x,\abstr{z_0}M_0~z_0)
\plus
\elimsup^2(x,\abstr{z_1}M_1~z_1))
)
\plus
\elimsup^2(t,
\abstr{y}
(\elimsup^1(y,\abstr{z_2}~M_2~z_2)
\plus
\elimsup^2(y,\abstr{z_3}~M_3~z_3))
))
$$
with
$$M_0 = \lambda \abstr{s} \elimtop(s, \test(f~{\bf 0},
 \super{\super{1.\star}{0.\star}}{\super{0.\star}{0.\star}},
 \super{\super{0.\star}{1.\star}}{\super{0.\star}{0.\star}}))$$
$$M_1 = \lambda \abstr{s} \elimtop(s, \test(f~{\bf 0},
 \super{\super{0.\star}{1.\star}}{\super{0.\star}{0.\star}},
 \super{\super{1.\star}{0.\star}}{\super{0.\star}{0.\star}}))$$
$$M_2 = \lambda \abstr{s} \elimtop(s, \test(f~{\bf 1},
 \super{\super{0.\star}{0.\star}}{\super{1.\star}{0.\star}},
 \super{\super{0.\star}{0.\star}}{\super{0.\star}{1.\star}}))$$
$$M_3 = \lambda \abstr{s} \elimtop(s, \test(f~{\bf 1},
 \super{\super{0.\star}{0.\star}}{\super{0.\star}{1.\star}},
 \super{\super{0.\star}{0.\star}}{\super{1.\star}{0.\star}}))$$ 
The operator $H\otimes I$ is also expressed as
in the proof of Theorem~\ref{matrices}
and Deutsch's algorithm is the proof of
$(\B\Rightarrow\B)\Rightarrow\B$
$$\mbox{\it Deutsch}
=
\lambda \abstr{f}\pi'_2 ((H\otimes I)~(U~f~\overline{\ket{+-}}))$$
Let $f$ be a proof of $\B\Rightarrow\B$.
If $f$ is a constant function, we have $\mbox{\it
  Deutsch}\ f \lra^* {\bf 0}$, while if $f$ if not constant, $\mbox{\it Deutsch}\
f \lra^* {\bf 1}$.

\section{Conclusion}

We have extended intuitionistic propositional logic with a connective
$\odot$, that has both excessive and harmonious deduction rules, and
with interstitial rules.  We have then extended this logic again with
scalars.  We have shown that the proof language of this logic forms
the core of a quantum programming language.

The connective $\odot$, with its elimination symbol $\elimsup$, models
information-erasure, non-reversibility, and non-determinism, that
occur, for example, in quantum measurement.  With its elimination
symbols $\elimsup^1$ and $\elimsup^2$, it models the
information-preservation, reversibility, and determinism that occur,
for example, in unitary transformations.

There are several points that we did not address in this paper.
First, we leave open the question of the interpretation of this logic
in a model, in particular a categorical one, besides the obvious
Lindenbaum algebra.

Then, these notions of insufficient and excessive deduction rules are not
specific to natural deduction and similar notions could be defined and
investigated, for instance, in sequent calculus.
Note that in the sequent calculus, harmony can be defined in a stronger
sense, that includes, not only the possibility to reduce proofs, but
also to reduce the use of the rule axiom on non-atomic propositions to
smaller ones \cite{MillerPimentel}---an analogue of the $\eta$-expansion,
but generalized to arbitrary connectives.

Finally, the $\odot^{\mathbb C}$-calculus can express all quantum
circuits, as it can express matrices and measurement
operators. However, it is not restricted to quantum algorithms, since
the $\odot$ connective addresses the question of the
information-erasure, non-reversibility, and non-determinism of
measurement, but not that of linearity and unitarity.
We have started investigating, in \cite{fscd22}, 
a restriction of the $\odot$-calculus to linear operators,
forbidding, for example, the non-linear proof of the proposition $\Q
\Rightarrow \Q^{\otimes 2}$
\[
  \begin{array}{r@{}l@{}l}
    \lambda \abstr{x}
    &\elimsup^1(x,\, & \abstr{y}
    \elimsup^1(x, \abstr{y_1}\super{\super{\elimtop(y,y_1)}{0.\star}}
            {\super{0.\star}{0.\star}}) \plus \elimsup^2(x, \abstr{z_1} \super{\super{0.\star}{\elimtop(y,z_1)}}{\super{0.\star}{0.\star}}))\\
      & \plus\\
      &\elimsup^2(x,\,& \abstr{z}
      \elimsup^1(x, \abstr{y_2} \super{\super{0.\star}{0.\star}}{\super{\elimtop(z,y_2)}{0.\star}}) \plus \elimsup^2(x, \abstr{z_2} \super{\super{0.\star}{0.\star}}{\super{0.\star}{\elimtop(z,z_2)}}))
  \end{array}
\]
that maps $\super{a.\star}{b.\star}$ to
$\super{\super{a^2.\star}{ab.\star}}{\super{ab.\star}{b^2.\star}}$, that is
$a.\ket{0} + b.\ket{1}$ to
$a^2.\ket{00} + ab.\ket{01} +
ab.\ket{10} + b^2.\ket{11}$ and thus expresses cloning.

\section*{Acknowledgements}

The authors want to thank Jean-Baptiste Joinet, Jean-Pierre Jouannaud,
Dale Miller, Alberto Naibo, and Alex Tsokurov for useful discussions.

\bibliographystyle{elsarticle-harv}
\bibliography{../introelim}

\appendix

\section{Strong termination of proof reduction in
  intuitionistic propositional natural deduction}

\begin{definition}[Syntax]
  \begin{align*}
    A =~& \top \mid \bot \mid A \Rightarrow A \mid A \wedge A \mid A \vee A\\
    t =~& x\mid \star \mid \elimtop(t,u) \mid \elimbot(t)\mid \lambda \abstr{x}t\mid t~u\\
    &\mid \pair{t}{u} \mid \elimand^1(t,\abstr{x}u)\mid \elimand^2(t,\abstr{x}u)    \\
        &\mid \inl(t)\mid \inl(r)\mid \elimor(t,\abstr{x}u,\abstr{y}v)
  \end{align*}
\end{definition}

The proofs of the form $\star$, $\lambda \abstr{x}t$, $\pair{t}{u}$,
$\inl(t)$, and $\inr(t)$ are called {\it introductions}.

\begin{definition}[Typing rules]
$$\irule{}
        {\Gamma \vdash x:A}
        {\mbox{axiom~$x:A \in \Gamma$}}$$
$$\irule{}
        {\Gamma \vdash \star:\top}
        {\mbox{$\top$-i}}$$
$$\irule{\Gamma \vdash t:\top & \Gamma \vdash u:C}
        {\Gamma \vdash \elimtop(t,u):C}
        {\mbox{$\top$-e}}$$
$$\irule{\Gamma \vdash t:\bot}
        {\Gamma \vdash \elimbot(t):C}
        {\mbox{$\bot$-e}}$$
$$\irule{\Gamma, x:A \vdash t:B}
        {\Gamma \vdash \lambda \abstr{x}t:A \Rightarrow B}
        {\mbox{$\Rightarrow$-i}}$$
$$\irule{\Gamma \vdash t:A\Rightarrow B & \Gamma \vdash u:A}
        {\Gamma \vdash t~u:B}
        {\mbox{$\Rightarrow$-e}}$$
$$\irule{\Gamma \vdash t:A & \Gamma \vdash u:B}
        {\Gamma \vdash \pair{t}{u}:A \wedge B}
        {\mbox{$\wedge$-i}}$$
$$\irule{\Gamma \vdash t:A \wedge B & \Gamma, x:A \vdash u:C}
        {\Gamma \vdash \elimand^1(t,\abstr{x}u):C}
        {\mbox{$\wedge$-e1}}$$
$$\irule{\Gamma \vdash t:A \wedge B & \Gamma, x:B \vdash u:C}
        {\Gamma \vdash \elimand^2(t,\abstr{x}u):C}
        {\mbox{$\wedge$-e2}}$$
$$\irule{\Gamma \vdash t:A}
        {\Gamma \vdash \inl(t):A \vee B}
        {\mbox{$\vee$-i1}}$$
$$\irule{\Gamma \vdash t:B}
        {\Gamma \vdash \inr(t):A \vee B}
        {\mbox{$\vee$-i2}}$$
        $$\irule{\Gamma \vdash t:A \vee B & \Gamma, x:A \vdash u:C & \Gamma, y:B \vdash v:C}
        {\Gamma \vdash \elimor(t,\abstr{x}u,\abstr{y}v):C}
        {\mbox{$\vee$-e}}$$
\end{definition}  

\begin{definition}[Reduction rules]
  \label{app-reductionrules}
  \begin{align*}
    \elimtop(\star, v) & \longrightarrow v\\
    (\lambda \abstr{x}t)~u & \longrightarrow (u/x)t\\
    \elimand^1(\pair{t}{u}, \abstr{x}v) & \longrightarrow (t/x)v\\
    \elimand^2(\pair{t}{u}, \abstr{x}v) & \longrightarrow (u/x)v\\
    \elimor(\inl(t),\abstr{x}v,\abstr{y}w) & \longrightarrow (t/x)v\\
    \elimor(\inr(u),\abstr{x}v,\abstr{y}w) & \longrightarrow (u/y)w
  \end{align*}
\end{definition}

\begin{definition}
\label{app-reducibility}
  We define, by induction on the proposition $A$, a set of proofs
$\llbracket A \rrbracket$:
\begin{itemize}
\item $t \in \llbracket \top \rrbracket$ if $t$ strongly terminates,

\item $t \in \llbracket \bot \rrbracket$ if $t$ strongly terminates,

\item $t \in \llbracket A \Rightarrow B \rrbracket$ if $t$ strongly
  terminates and whenever it reduces to a proof of the form $\lambda
  \abstr{x}u$, then for every $v \in \llbracket A \rrbracket$, $(v/x)u \in
  \llbracket B \rrbracket$,

\item $t \in \llbracket A \wedge B \rrbracket$ if $t$ strongly
  terminates, whenever it reduces to a proof of the form $\pair{u}{v}$,
  then $u \in \llbracket A \rrbracket$ and $v \in \llbracket
  B \rrbracket$,

\item $t \in \llbracket A \vee B \rrbracket$ if $t$ strongly
  terminates, whenever it reduces to a proof of the form $\inl(u)$,
  then $u \in \llbracket A \rrbracket$, and whenever it reduces to a
  proof of the form $\inr(v)$, then $v \in \llbracket B \rrbracket$.
\end{itemize}
\end{definition}

\begin{definition}
If $t$ is a strongly terminating proof, we write $|t|$ for the
maximum length of a reduction sequence issued from $t$.
\end{definition}

\begin{proposition}[Variables]
\label{app-Var}
For any $A$, the set $\llbracket A \rrbracket$ contains all the variables.
\end{proposition}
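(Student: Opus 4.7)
The plan is to prove the statement by a straightforward induction on the structure of the proposition $A$, exploiting the fact that a variable is itself a normal form and therefore satisfies the defining conditions of $\llbracket A \rrbracket$ trivially.

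First I would observe that a variable $x$ admits no reduction step: inspecting the reduction rules in Definition~\ref{app-reductionrules}, every left-hand side has a strictly compound head symbol ($\lambda$, $\elimand$, or $\elimor$ applied to an introduction), so no rule can fire at or inside a variable. Hence the only reduction sequence issued from $x$ is the one-element sequence, and in particular $x$ strongly terminates. This immediately handles the base cases $A = \top$ and $A = \bot$, where strong termination is the only requirement imposed by Definition~\ref{app-reducibility}.

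For the three inductive cases, I would apply the same observation: since $x$ does not reduce, the only proof to which $x$ reduces is $x$ itself, which is not an introduction. Thus, for $A = B \Rightarrow C$, the condition ``whenever $x$ reduces to $\lambda y~u$, for every $v \in \llbracket B \rrbracket$, $(v/y)u \in \llbracket C \rrbracket$'' holds vacuously; for $A = B \wedge C$, the condition on reducts of the form $\pair{u}{v}$ holds vacuously; and for $A = B \vee C$, the conditions on reducts of the form $\inl(u)$ or $\inr(v)$ both hold vacuously. In each case strong termination has already been established, so $x \in \llbracket A \rrbracket$.

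Because the argument rests only on the fact that variables are normal forms and on vacuous quantification, the induction hypothesis is not actually invoked, and there is no significant obstacle: the structural induction is really a case analysis on the outermost connective of $A$. The only mildly delicate point is making sure the reduction rules in Definition~\ref{app-reductionrules} contain no rule whose redex could have a variable head; this is evident by inspection.
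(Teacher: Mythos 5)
Your proof is correct and is essentially the paper's own argument: a variable is irreducible, hence strongly terminating, and it never reduces to an introduction, so the remaining clauses of Definition~\ref{app-reducibility} hold vacuously. The extra scaffolding of an induction on $A$ is harmless but, as you note yourself, unnecessary — the paper states the same two facts directly.
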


\begin{proof}
A variable is irreducible, hence it strongly terminates. Moreover, it
never reduces to an introduction.
\end{proof}   

\begin{proposition}[Closure by reduction]
\label{app-closure}
If $t \in \llbracket A \rrbracket$ and $t \longrightarrow^* t'$, then 
$t' \in \llbracket A \rrbracket$.
\end{proposition}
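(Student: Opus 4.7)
The plan is to proceed by case analysis on the structure of the proposition $A$, exploiting two elementary facts about the reduction relation: (i) strong termination is preserved under reduction, since any reduction sequence issued from $t'$ can be prepended to the sequence $t \longrightarrow^* t'$ to give a reduction sequence issued from $t$; and (ii) the relation $\longrightarrow^*$ is transitive, so if $t \longrightarrow^* t'$ and $t' \longrightarrow^* u$, then $t \longrightarrow^* u$. Both facts are immediate from the definitions.

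First I would dispatch the atomic cases $A = \top$ and $A = \bot$: here $\llbracket A \rrbracket$ is just the set of strongly terminating proofs, and by (i), $t'$ strongly terminates. For the inductive cases, I would treat each connective uniformly. For $A = B \Rightarrow C$: $t'$ strongly terminates by (i); moreover, if $t' \longrightarrow^* \lambda x~u$, then by (ii) $t \longrightarrow^* \lambda x~u$, so by the hypothesis $t \in \llbracket A \rrbracket$ we get that for every $v \in \llbracket B \rrbracket$, $(v/x)u \in \llbracket C \rrbracket$, which is exactly what is required of $t'$. The cases $A = B \wedge C$ and $A = B \vee C$ follow the same template: strong termination is inherited, and any reduction of $t'$ to an introduction is also a reduction of $t$ to that introduction, so the sub-reducibility conditions transfer verbatim.

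There is no genuine obstacle in this proposition — it is a bookkeeping lemma that records the fact that the reducibility candidates defined in Definition~\ref{app-reducibility} are downward-closed under $\longrightarrow^*$. The only thing to be careful about is to notice that Definition~\ref{app-reducibility} quantifies over reducts of the form of an introduction, and not over one-step reducts specifically, so the inductive clauses transfer without even needing a separate argument that the structure of the reducts is preserved. Thus the proof is a short case analysis with no nested induction required.
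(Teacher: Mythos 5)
Your proof is correct and matches the paper's own argument: strong termination is inherited along reduction, and for each connective the condition on reducts of introduction form transfers from $t'$ to $t$ by transitivity of $\longrightarrow^*$. Nothing further is needed.
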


\begin{proof}
If $t \longrightarrow^* t'$ and $t$ strongly terminates, then $t'$
strongly terminates.

Furthermore, if $A$ has the form $B \Rightarrow C$ and $t'$ reduces to
$\lambda \abstr{x}u$, then so does $t$, hence for every $v \in \llbracket B
\rrbracket$, $(v/x)u \in \llbracket C \rrbracket$.

If $A$ has the form $B \wedge C$ and $t'$ reduces to $\pair{u}{v}$,
then so does $t$, hence $u \in \llbracket B \rrbracket$ and $v \in
\llbracket C \rrbracket$.

If $A$ has the form $B \vee C$ and $t'$ reduces to $\inl(u)$, then so
does $t$, hence $u \in \llbracket B \rrbracket$ and if $A$ has the
form $B \vee C$ and $t'$ reduces to $\inr(v)$, then so does $t$, hence
$v \in \llbracket C \rrbracket$.
\end{proof}

\begin{proposition}[Girard's lemma]
\label{app-CR3}
Let $t$ be a proof that is not an introduction, 
such that all the one-step reducts of $t$
are in $\llbracket A \rrbracket$. Then, $t \in \llbracket A \rrbracket$.
\end{proposition}

\begin{proof}
Let $t, t_2, \ldots$ be a reduction sequence issued from $t$. If it has a
single element, it is finite. Otherwise, we have $t \longrightarrow
t_2$. As $t_2 \in \llbracket A \rrbracket$, it strongly terminates and
the reduction sequence is finite. Thus, $t$ strongly terminates.

Furthermore, if $A$ has the form $B \Rightarrow C$ and $t
\longrightarrow^* \lambda \abstr{x}u$, then let $t , t_2, \ldots, t_n$ be a
reduction sequence from $t$ to $\lambda \abstr{x}u$.  As $t_n$ is an
introduction and $t$ is not, $n \geq 2$. Thus, $t \longrightarrow t_2
\longrightarrow^* t_n$. We have $t_2 \in \llbracket A \rrbracket$,
thus for all $v \in \llbracket B \rrbracket$, $(v/x)u \in \llbracket C
\rrbracket$.

And if $A$ has the form $B \wedge C$ and $t \longrightarrow^* \pair{u}{v}$,
then let $t , t_2, \ldots, t_n$ be a reduction sequence
from $t$ to $\pair{u}{v}$.  As $t_n$ is an introduction and
$t$ is not, $n \geq 2$. Thus, $t \longrightarrow t_2 \longrightarrow^*
t_n$. We have $t_2 \in \llbracket A \rrbracket$, thus $u \in
\llbracket B \rrbracket$ and $v \in \llbracket C \rrbracket$.

If $A$ has the form $B \vee C$ and $t \longrightarrow^* \inl(u)$, then let $t ,
t_2, \ldots, t_n$ be a reduction sequence from $t$ to $\inl(u)$.  As
$t_n$ is an introduction and $t$ is not, $n \geq 2$.  Thus, $t
\longrightarrow t_2 \longrightarrow^* t_n$. We have $t_2 \in
\llbracket A \rrbracket$, thus $u \in \llbracket B \rrbracket$.

If $A$ has the form $B \vee C$ and $t \longrightarrow^* \inr(v)$, then let $t ,
t_2, \ldots, t_n$ be a reduction sequence from $t$ to $\inr(v)$.  As
$t_n$ is an introduction and $t$ is not, $n \geq 2$.  Thus, $t
\longrightarrow t_2 \longrightarrow^* t_n$. We have $t_2 \in
\llbracket A \rrbracket$, thus $v \in \llbracket C \rrbracket$.
\end{proof}

In Propositions~\ref{app-star} to~\ref{app-elimor}, we prove the
adequacy of each proof constructor. 

\begin{proposition}[Adequacy of $\star$]
\label{app-star}
We have $\star \in \llbracket \top \rrbracket$.
\end{proposition}

\begin{proof}
As $\star$ is irreducible, it strongly terminates, hence
$\star \in \llbracket \top \rrbracket$. 
\end{proof}

\begin{proposition}[Adequacy of $\lambda$]
\label{app-abstraction}
If, for all $u \in \llbracket A \rrbracket$, $(u/x)t \in \llbracket B
\rrbracket$, then $\lambda \abstr{x}t \in \llbracket A \Rightarrow B
\rrbracket$.
\end{proposition}

\begin{proof}
By Proposition~\ref{app-Var}, $x \in \llbracket A \rrbracket$, thus
$t = (x/x)t \in \llbracket B \rrbracket$. Hence, $t$ strongly
terminates.  Consider a reduction sequence issued from $\lambda
\abstr{x}t$.  This sequence can only reduce $t$ hence it is finite. Thus,
$\lambda \abstr{x}t$ strongly terminates.

Furthermore, if $\lambda \abstr{x}t \longrightarrow^* \lambda \abstr{x}t'$, then
$t \lra^* t'$.  Let $u \in \llbracket A \rrbracket$,
$(u/x)t \lra^* (u/x)t'$. 
By Proposition~\ref{app-closure}, $(u/x)t' \in \llbracket B \rrbracket$.
\end{proof}

\begin{proposition}[Adequacy of $\pair{}{}$]
\label{app-pair}
If $t_1 \in \llbracket A \rrbracket$ and $t_2 \in \llbracket B
\rrbracket$, then $\pair{t_1}{t_2} \in \llbracket A \wedge B
\rrbracket$.
\end{proposition}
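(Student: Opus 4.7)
The plan is to unfold the definition of $\llbracket A \wedge B \rrbracket$ given in Definition \ref{app-reducibility} and discharge its two conjuncts: strong termination of $\pair{t_1}{t_2}$ and the requirement that any reduct of the form $\pair{u}{v}$ satisfies $u \in \llbracket A \rrbracket$ and $v \in \llbracket B \rrbracket$. This follows the same schema as the proof of Proposition \ref{app-abstraction} just given, and is simpler because no substitution has to be analysed.

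First, I would note that by hypothesis $t_1$ and $t_2$ strongly terminate (this is part of what it means to be in $\llbracket A \rrbracket$ and $\llbracket B \rrbracket$). Inspecting the reduction rules of Definition \ref{app-reductionrules}, I observe that a term of the form $\pair{t_1}{t_2}$ has no top-level redex: the only rule that could apply to a pair at the root is the $\elimand$ reduction, which requires the pair to occur under an $\elimand$. Consequently any reduction step from $\pair{t_1}{t_2}$ takes place entirely inside $t_1$ or inside $t_2$. Therefore an arbitrary reduction sequence from $\pair{t_1}{t_2}$ projects to a pair of reduction sequences in $t_1$ and $t_2$, both finite by hypothesis, so $\pair{t_1}{t_2}$ strongly terminates.

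For the second clause, suppose $\pair{t_1}{t_2} \longrightarrow^* \pair{u}{v}$. By the same observation that reductions only act componentwise, we have $t_1 \longrightarrow^* u$ and $t_2 \longrightarrow^* v$. Applying Proposition \ref{app-closure} to each component yields $u \in \llbracket A \rrbracket$ and $v \in \llbracket B \rrbracket$, which is exactly what the definition of $\llbracket A \wedge B \rrbracket$ demands.

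There is no real obstacle here: the argument is essentially a direct unfolding of definitions. The only subtlety worth being explicit about is the componentwise-reduction observation, which justifies both that the strong termination of the parts implies that of the pair and that every reduct of the pair that is itself a pair is built from reducts of the parts. This lemma is so immediate from Definition \ref{app-reductionrules} that it does not need to be stated separately, but it is what makes both halves of the argument go through.
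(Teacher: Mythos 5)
Your proof is correct and follows essentially the same route as the paper's: strong termination because a pair at the root is never a redex, so every step reduces a component, and then closure under reduction (Proposition \ref{app-closure}) applied componentwise for the second clause. The only difference is that you spell out the componentwise-reduction observation explicitly, which the paper leaves implicit.
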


\begin{proof}
The proofs $t_1$ and $t_2$ strongly terminate. Consider a reduction
sequence issued from $\pair{t_1}{t_2}$.  This sequence can only
reduce $t_1$
and $t_2$, hence it is finite.  Thus, $\pair{t_1}{t_2}$
strongly terminates.

Furthermore, if $\pair{t_1}{t_2} \longrightarrow^* \pair
{t'_1}{t'_2}$, then $t_1 \lra^* t'_1$ and $t_2 \lra^* t'_2$.  By
Proposition~\ref{app-closure}, $t'_1 \in \llbracket A \rrbracket$ and
$t'_2 \in \llbracket B \rrbracket$.
\end{proof}

\begin{proposition}[Adequacy of $\inl$]
\label{app-inl}
If $t \in \llbracket A \rrbracket$, then $\inl(t) \in \llbracket A
\vee B \rrbracket$.
\end{proposition}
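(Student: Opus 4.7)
The plan is to mirror the proof of Adequacy of $\pair{}{}$ (Proposition \ref{app-pair}), since $\inl$ is a unary introduction constructor for a disjunctive proposition, and the reducibility clause for $A \vee B$ looks just like that for $A \wedge B$ except split across the two injections.

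First I would use the hypothesis $t \in \llbracket A \rrbracket$ to get that $t$ strongly terminates. Then I would observe that any reduction step issued from $\inl(t)$ must take place inside $t$, since $\inl(t)$ is not itself a redex (the reduction rules in Definition \ref{app-reductionrules} never fire at the root of an $\inl$). Consequently every reduction sequence from $\inl(t)$ has length at most $|t|$, so $\inl(t)$ strongly terminates.

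Next I would verify the two remaining clauses in the definition of $\llbracket A \vee B \rrbracket$. Suppose $\inl(t) \longrightarrow^* t''$ with $t''$ of the form $\inl(t')$; by the previous observation the reduction happens entirely within $t$, so $t \longrightarrow^* t'$. By Proposition \ref{app-closure} (closure by reduction), $t' \in \llbracket A \rrbracket$, as required. For the other clause, $\inl(t)$ can never reduce to a proof of the form $\inr(v)$, because reductions preserve the outer constructor $\inl$; hence that clause is vacuously true. This completes the verification that $\inl(t) \in \llbracket A \vee B \rrbracket$.

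There is no real obstacle: the only point to be careful about is that reduction inside $\inl$ preserves the outer $\inl$, which is immediate from the fact that the reduction relation is the contextual closure of the rules in Definition \ref{app-reductionrules} and none of those rules rewrites an $\inl$-headed term. The analogous Proposition \ref{app-inr} will be proved by the symmetric argument, with $B$ in place of $A$ and $\inr$ in place of $\inl$.
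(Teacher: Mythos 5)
Your proof is correct and follows essentially the same route as the paper's: strong termination because reductions of $\inl(t)$ can only occur inside $t$, closure by reduction (Proposition \ref{app-closure}) for the $\inl(t')$ clause, and the observation that $\inl(t)$ never reduces to an $\inr$-headed proof. Nothing is missing.
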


\begin{proof}
The proof $t$ strongly terminates. Consider a reduction sequence
issued from $\inl(t)$.  This sequence can only reduce $t$, hence it is
finite.  Thus, $\inl(t)$ strongly terminates.

Furthermore, if $\inl(t) \longrightarrow^* \inl(t')$, then $t \lra^* t'$.
By Proposition~\ref{app-closure}, $t' \in \llbracket A
\rrbracket$. And $\inl(t)$ never reduces to $\inr(t')$. 
\end{proof}

\begin{proposition}[Adequacy of $\inr$]
\label{app-inr}
If $t \in \llbracket B \rrbracket$, then $\inr(t) \in \llbracket A
\vee B \rrbracket$.
\end{proposition}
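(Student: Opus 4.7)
The plan is to mirror almost verbatim the proof of Proposition \ref{app-inl} (Adequacy of $\inl$), swapping the roles of the left and right injections and of the propositions $A$ and $B$. The statement is symmetric to that one, and the reduction rules of Definition \ref{app-reductionrules} treat $\inl$ and $\inr$ symmetrically, so no new ideas are required.

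Concretely, I would first observe that since $t \in \llbracket B \rrbracket$, the proof $t$ strongly terminates. Then, to show that $\inr(t)$ strongly terminates, I would note that the only redex positions available inside $\inr(t)$ lie within $t$ itself (the constructor $\inr$ does not head any reduction rule, and there is no rule with an $\inr$ on the outside). Hence every reduction sequence issued from $\inr(t)$ is of the form $\inr(t) \longrightarrow \inr(t_1) \longrightarrow \inr(t_2) \longrightarrow \cdots$, and is therefore finite because $t$ strongly terminates.

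Next, I would unfold the definition of $\llbracket A \vee B \rrbracket$ from Definition \ref{app-reducibility}. There are two clauses to check. For the $\inl$ clause, it is vacuous: $\inr(t)$ never reduces to a proof of the form $\inl(u)$, since reductions inside $\inr(t)$ only affect $t$ and leave the outer $\inr$ untouched. For the $\inr$ clause, if $\inr(t) \longrightarrow^* \inr(t')$, then $t'$ is a reduct of $t$, so by Proposition \ref{app-closure} applied to $t \in \llbracket B \rrbracket$, we get $t' \in \llbracket B \rrbracket$, as required.

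There is no real obstacle: the only subtlety is to confirm that the reduction relation has no rule firing at the root of an $\inr$-term (so that the ``never reduces to $\inl(u)$'' observation is correct), which is immediate from inspection of Definition \ref{app-reductionrules}. The proof is a line-by-line transposition of the $\inl$ case, and conclude by invoking Proposition \ref{app-closure} exactly once.
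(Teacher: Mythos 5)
Your proof is correct and is essentially the paper's own argument: strong termination of $\inr(t)$ because reductions can only occur inside $t$, vacuity of the $\inl$ clause, and one application of Proposition \ref{app-closure} for the $\inr$ clause.
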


\begin{proof}
  Similar to that of Proposition~\ref{app-inl}.  
\end{proof}

\begin{proposition}[Adequacy of $\elimtop$]
\label{app-elimtop}
If $t_1 \in \llbracket \top \rrbracket$ and $t_2 \in \llbracket C \rrbracket$, 
then $\elimtop(t_1,t_2) \in \llbracket C \rrbracket$.
\end{proposition}

\begin{proof}
The proofs $t_1$ and $t_2$ strongly terminate.  We prove, by
induction on $|t_1| + |t_2|$, that $\elimtop(t_1,t_2)
\in \llbracket C \rrbracket$.  Using Proposition~\ref{app-CR3}, we only
need to prove that every of its one step reducts is in $\llbracket C
\rrbracket$.  If the reduction takes place in $t_1$ or $t_2$, then we
apply Proposition~\ref{app-closure} and the induction hypothesis.

Otherwise, the proof $t_1$ is $\star$ and the
reduct is $t_2 \in \llbracket C \rrbracket$.
\end{proof}

\begin{proposition}[Adequacy of $\elimbot$]
\label{app-elimbot}
If $t \in \llbracket \bot \rrbracket$, 
then $\elimbot(t) \in \llbracket C \rrbracket$.
\end{proposition}

\begin{proof}
The proof $t$ strongly terminates.  Consider a reduction sequence
issued from $\elimbot(t)$.  This sequence can only reduce $t$, hence it
is finite.  Thus, $\elimbot(t)$ strongly terminates.  Moreover, it
never reduces to an introduction.
\end{proof}

\begin{proposition}[Adequacy of application]
\label{app-application}
If $t_1 \in \llbracket A \Rightarrow B \rrbracket$ and $t_2 \in
\llbracket A \rrbracket$, then $t_1~t_2 \in \llbracket B
\rrbracket$.
\end{proposition}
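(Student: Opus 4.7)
The plan is to use Girard's lemma (Proposition \ref{app-CR3}) since $t_1~t_2$ is an elimination, hence not an introduction. So it suffices to check that every one-step reduct of $t_1~t_2$ lies in $\llbracket B \rrbracket$, from which strong termination and the reducibility clause will follow automatically.

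First I would observe that both $t_1$ and $t_2$ strongly terminate, since $t_1 \in \llbracket A \Rightarrow B \rrbracket$ and $t_2 \in \llbracket A \rrbracket$. This lets me do induction on $|t_1| + |t_2|$. For the Girard's lemma step, I enumerate the shapes of a one-step reduction from $t_1~t_2$:
\begin{itemize}
\item The reduction takes place inside $t_1$, giving $t_1'~t_2$ with $t_1 \longrightarrow t_1'$. By Proposition \ref{app-closure}, $t_1' \in \llbracket A \Rightarrow B \rrbracket$, and since $|t_1'| + |t_2| < |t_1| + |t_2|$, the induction hypothesis yields $t_1'~t_2 \in \llbracket B \rrbracket$.
\item Symmetrically, a reduction inside $t_2$ is handled by Proposition \ref{app-closure} and the induction hypothesis.
\item The reduction is a head $\beta$-step, which requires $t_1$ to be of the form $\lambda x~u$. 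Then the reduct is $(t_2/x)u$. Because $\lambda x~u = t_1 \in \llbracket A \Rightarrow B \rrbracket$ and $t_1$ trivially reduces to $\lambda x~u$ in zero steps, the defining clause of $\llbracket A \Rightarrow B \rrbracket$ applied to $t_2 \in \llbracket A \rrbracket$ gives $(t_2/x)u \in \llbracket B \rrbracket$ directly.
\end{itemize}

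The only subtlety, and what I expect to be the one place to be careful, is the $\beta$-case: the proof works precisely because the definition of $\llbracket A \Rightarrow B \rrbracket$ quantifies over all reducts of the form $\lambda x~u$ rather than only requiring the top symbol to be an abstraction. Once this is noticed, no further work is needed. Since $t_1~t_2$ is never an introduction, Girard's lemma then packages the three cases above into the desired conclusion $t_1~t_2 \in \llbracket B \rrbracket$, completing the induction.
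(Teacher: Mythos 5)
Your proof is correct and follows essentially the same route as the paper's: strong termination of $t_1$ and $t_2$, induction on $|t_1| + |t_2|$, Girard's lemma to reduce to one-step reducts, closure by reduction for internal reductions, and the defining clause of $\llbracket A \Rightarrow B \rrbracket$ (applied to $t_1$ as its own zero-step reduct) for the head $\beta$-step. The subtlety you flag about the clause quantifying over all reducts of the form $\lambda x~u$ is exactly what the paper's argument relies on as well.
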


\begin{proof}
The proofs $t_1$ and $t_2$ strongly terminate. We prove, by induction
on $|t_1| + |t_2|$, that $t_1~t_2 \in \llbracket B \rrbracket$. Using
Proposition~\ref{app-CR3}, we only need to prove that every of its one
step reducts is in $\llbracket B \rrbracket$.  If the reduction takes
place in $t_1$ or in $t_2$, then we apply Proposition~\ref{app-closure}
and the induction hypothesis.

Otherwise, the proof $t_1$ has the form $\lambda \abstr{x}u$ and the reduct
is $(t_2/x)u$.  As $\lambda \abstr{x}u \in \llbracket A \Rightarrow B
\rrbracket$, we have $(t_2/x)u \in \llbracket B \rrbracket$.
\end{proof}

\begin{proposition}[Adequacy of $\elimand^1$]
\label{app-elimand1}
If $t_1 \in \llbracket A \wedge B \rrbracket$ and,
for all $u$ in $\llbracket A \rrbracket$,
$(u/x)t_2 \in \llbracket C \rrbracket $, 
then $\elimand^1(t_1, \abstr{x}t_2) \in \llbracket C \rrbracket$.
\end{proposition}

\begin{proof}
By Proposition~\ref{app-Var}, $x \in \llbracket A \rrbracket$
thus $t_2 = (x/x)t_2 \in \llbracket C
\rrbracket$.  Hence, $t_1$ and $t_2$ strongly terminate.  We prove, by
induction on $|t_1| + |t_2|$, that $\elimand^1(t_1, \abstr{x}t_2)
\in \llbracket C \rrbracket$.  Using Proposition~\ref{app-CR3}, we only
need to prove that every of its one step reducts is in $\llbracket C
\rrbracket$.  If the reduction takes place in $t_1$ or $t_2$, then we
apply Proposition~\ref{app-closure} and the induction hypothesis.

Otherwise, the proof $t_1$ has the form $\pair{u}{v}$ and the
reduct is $(u/x)t_2$.  As $\pair{u}{v} \in \llbracket A
\wedge B \rrbracket$, we have $u \in \llbracket A \rrbracket$.
Hence, $(u/x)t_2 \in \llbracket C \rrbracket$.
\end{proof}

\begin{proposition}[Adequacy of $\elimand^2$]
\label{app-elimand2}
If $t_1 \in \llbracket A \wedge B \rrbracket$ and,
for all $u$ in $\llbracket B \rrbracket$, 
$(u/x)t_2 \in \llbracket C \rrbracket $, 
then $\elimand^2(t_1,\abstr{x}t_2) \in \llbracket C \rrbracket$.
\end{proposition}

\begin{proof}
Similar to that of Proposition~\ref{app-elimand1}.
\end{proof}

\begin{proposition}[Adequacy of $\elimor$]
\label{app-elimor}
If $t_1 \in \llbracket A \vee B \rrbracket$, for all $u$ in $\llbracket A
\rrbracket$, $(u/x)t_2 \in \llbracket C \rrbracket $, and, for all $v$
in $\llbracket B \rrbracket$, $(v/y)t_3 \in \llbracket C \rrbracket $,
then $\elimor(t_1, \abstr{x}t_2, \abstr{y}t_3) \in \llbracket C \rrbracket$.
\end{proposition}

\begin{proof}
By Proposition~\ref{app-Var}, $x \in \llbracket A \rrbracket$, thus
$t_2 = (x/x)t_2 \in \llbracket C \rrbracket$. In the same way, $t_3
\in \llbracket C \rrbracket$.  Hence, $t_1$, $t_2$, and $t_3$ strongly
terminate.  We prove, by induction on $|t_1| + |t_2| + |t_3|$, that
$\elimor(t_1, \abstr{x}t_2, \abstr{y}t_3) \in \llbracket C
\rrbracket$.  Using Proposition~\ref{app-CR3}, we only need to prove
that every of its one step reducts is in $\llbracket C \rrbracket$.
If the reduction takes place in $t_1$, $t_2$, or $t_3$, then we apply
Proposition~\ref{app-closure} and the induction hypothesis. Otherwise,
either:
\begin{itemize}
\item The proof $t_1$ has the form $\inl(w_2)$ and the reduct is
  $(w_2/x)t_2$. As $\inl(w_2) \in \llbracket A \vee B \rrbracket$, we
  have $w_2 \in \llbracket A \rrbracket$.  Hence, $(w_2/x)t_2 \in
  \llbracket C \rrbracket$.

\item The proof $t_1$ has the form $\inr(w_3)$ and the reduct is
  $(w_3/x)t_3$. As $\inr(w_3) \in \llbracket A \vee B \rrbracket$, we
  have $w_3 \in \llbracket B \rrbracket$.  Hence, $(w_3/x)t_3 \in
  \llbracket C \rrbracket$.
\qedhere
\end{itemize}
\end{proof}

\begin{theorem}[Adequacy]
Let $t$ be a proof of $A$ in a context $\Gamma = x_1:A_1, \ldots, x_n:A_n$ and
$\sigma$ be a substitution mapping each variable $x_i$ to an element
of $\llbracket A_i \rrbracket$, then $\sigma t \in \llbracket A
\rrbracket$.
\end{theorem}

\begin{proof}
By induction on the structure of $t$.

\begin{itemize}
\item If $t$ is a variable, then, by definition of $\sigma$, $\sigma t
  \in \llbracket A \rrbracket$.

\item If $t = \star$, then, by Proposition~\ref{app-star}, $\star \in \llbracket
  \top \rrbracket$, that is $\sigma t \in \llbracket A \rrbracket$.

\item If $t = \lambda \abstr{x}u$, where $u$ is a proof of $C$, then, by
  induction hypothesis, for every $v \in \llbracket B \rrbracket$,
  $(v/x) \sigma u \in \llbracket C \rrbracket$. Hence, by
  Proposition~\ref{app-abstraction}, $\lambda \abstr{x}\sigma u \in \llbracket
  B
  \Rightarrow C \rrbracket$, that is $\sigma t \in \llbracket A
  \rrbracket$.

\item If $t = \pair{u}{v}$, where $u$ is a proof of $B$ and
  $v$ a proof of $C$, then, by induction hypothesis, $\sigma u \in
  \llbracket B \rrbracket$ and $\sigma v \in \llbracket C \rrbracket$.
  Hence, by Proposition~\ref{app-pair}, $\pair{\sigma u}{\sigma v}
  \in \llbracket B \wedge C \rrbracket$, that is $\sigma t \in
  \llbracket A \rrbracket$.

\item If $t = \inl(u)$, where $u$ is a proof of $B$, then, by induction
  hypothesis, $\sigma u \in \llbracket B \rrbracket$.  Hence, by
  Proposition~\ref{app-inl}, $\inl(\sigma u) \in \llbracket B \vee C
  \rrbracket$, that is $\sigma t \in \llbracket A \rrbracket$.

\item If $t = \inr(v)$, the proof is similar, using Proposition~\ref{app-inr}.

\item If $t = \elimtop(u,v)$, where $u$ is a proof of $\top$ and $v$
  is a proof of $A$, then, by induction hypothesis, $\sigma u \in
  \llbracket \top \rrbracket$, and $\sigma v \in \llbracket A
  \rrbracket$. Hence, by Proposition~\ref{app-elimtop}, $\elimtop(\sigma
  u, \sigma v) \in \llbracket A \rrbracket$, that is $\sigma t \in
  \llbracket A \rrbracket$.

\item If $t = \elimbot(u)$, where $u$ is a proof of $\bot$, then, by
  induction hypothesis, $\sigma u \in \llbracket \bot \rrbracket$.  Hence, by
  Proposition~\ref{app-elimbot}, $\elimbot(\sigma u) \in \llbracket A
  \rrbracket$, that is $\sigma t \in \llbracket A \rrbracket$.

\item If $t = u~v$, where $u$ is a proof of $B \Rightarrow A$ and $v$
  a proof of $B$, then, by induction hypothesis, $\sigma u \in
  \llbracket B \Rightarrow A \rrbracket$ and $\sigma v \in \llbracket
  B \rrbracket$.  Hence, by Proposition~\ref{app-application}, $(\sigma u)
  (\sigma v) \in \llbracket A \rrbracket$, that is $\sigma t \in
  \llbracket A \rrbracket$.

\item If $t = \elimand^1(u_1,\abstr{x}u_2)$, where $u_1$ is a
  proof of $B \wedge C$ and $u_2$ a proof of $A$, then, by
  induction hypothesis, $\sigma u_1 \in \llbracket B \wedge C
  \rrbracket$, for all $v$ in $\llbracket B \rrbracket$, $(v/x)\sigma u_2 \in
  \llbracket A \rrbracket$. Hence, by Proposition~\ref{app-elimand1},
  $\elimand^1(\sigma u_1,\abstr{x}\sigma u_2) \in \llbracket A
  \rrbracket$, that is $\sigma t \in \llbracket A \rrbracket$.

\item If $t = \elimand^2(u_1,\abstr{x}u_2)$, the proof is similar,
  using Proposition~\ref{app-elimand2}.
  
\item If $t = \elimor(u_1,\abstr{x}u_2,\abstr{y}u_3)$, where $u_1$
  is a proof of $B \vee C$, $u_2$ a proof of $A$, and $u_3$ a proof
  of $A$, then, by induction hypothesis, $\sigma u_1 \in \llbracket B
  \vee C \rrbracket$, for all $v$ in $\llbracket B \rrbracket$,
  $(v/x)\sigma u_2 \in \llbracket A \rrbracket$, and for all $w$ in
  $\llbracket C \rrbracket$, $(w/x)\sigma u_3 \in \llbracket A
  \rrbracket$. Hence, by Proposition~\ref{app-elimor}, $\elimor(\sigma
  u_1,\abstr{x} \sigma u_2,\abstr{y} \sigma u_3) \in \llbracket A
  \rrbracket$, that is $\sigma t \in \llbracket A \rrbracket$.
\qedhere
\end{itemize}
\end{proof}

\begin{corollary}[Termination]
Let $t$ be a proof of $A$ in a context $\Gamma$. Then, $t$ strongly
terminates.
\end{corollary}

\begin{proof}
Let $\sigma$ be the substitution mapping each variable $x_i:A_i$ of
$\Gamma$ to itself. Note that, by Proposition~\ref{app-Var}, this
variable is an element of $\llbracket A_i \rrbracket$.  Then, $t =
\sigma t$ is an element of $\llbracket A \rrbracket$. Hence, it
strongly terminates.
\end{proof}
\end{document}